\documentclass[12pt,a4paper]{article}
\pdfoutput=1
 \usepackage{jheppub}
 \usepackage{amsmath,amsthm}
 \usepackage{amsfonts}
 \usepackage{mathtools}
 \usepackage{amssymb}
 \usepackage{caption}
 \usepackage{subcaption}
 \usepackage{slashed}
\usepackage{float}
\usepackage{graphicx}
\graphicspath{ {project2 tex/} }

\newcommand{\be} {\begin{equation}}
\newcommand{\ee} {\end{equation}}
\newcommand{\bea} {\begin{eqnarray}}
\newcommand{\eea} {\end{eqnarray}}
\newcommand{\ba} {\begin{array}}
\newcommand{\ea} {\end{array}}
\newcommand{\nn} {\nonumber}

\newtheorem{lemma}{Lemma}

 \title{Horizon  states and   the sign of their index
 in ${\cal N}=4$ dyons}
 \author{Aradhita Chattopadhyaya$^{\,a\,b}$, Justin R. David$^{\,c}$}
\affiliation{$^{\,a}$ School of Mathematics, Trinity College Dublin, Dublin 2, Ireland\\
$^{\,b}$ Hamilton Mathematical Institute, Trinity College, Dublin 2, Ireland\\
	$^{\,c}$Centre for High Energy Physics, Indian Institute of Science,\\
C. V. Raman Avenue, Bangalore 560012, India.}
\emailAdd{aradhita@maths.tcd.ie, justin@iisc.ac.in}

\abstract{Classical single centered
solutions of   $1/4$ BPS dyons  in ${\cal N}=4$ theories 
are usually constructed in  duality frames 
which contain non-trivial hair degrees of freedom localized outside the horizon. 
These modes are in addition to the fermionic zero modes associated with broken supersymmetry.
Identifying and removing the hair from the $1/4$ BPS index allows us to 
isolate the degrees of freedom associated with the horizon. 
The spherical symmetry of the horizon then ensures that index of the horizon states 
has to be positive. 
We  verify that this is indeed the case for the canonical example 
of dyons in type IIB theory on $K3\times T^2$ and prove this property holds for a class of 
states. 
We  generalise this observation to  all CHL orbifolds, this involves identifying the 
hair and isolating the horizon degrees of freedom. 
We then  identify the  horizon states 
for $1/4$  BPS dyons in ${\cal N}=4$ models obtained by freely acting 
${\mathbb{Z}}_2$ and ${\mathbb{ Z}}_3$ orbifolds of type IIB theory compactified on 
$T^6$ and observe that the index is again positive  for single centred black holes.
This observation coupled with the fact 
the $1/4$ 
BPS index of single centred solutions without removal of the hair violates  positivity 
indicates that there exists no duality frame in these models without non-trivial hair. 
}

\begin{document}
\maketitle
\flushbottom

\section{Introduction}
Counting  microscopic degrees of freedom for  extremal black holes in string 
theory  is a  useful probe into aspects of quantum gravity \cite{Strominger:1996sh}. 
For supersymmetric black holes,  one should in principle be able identify the  degrees of 
freedom both from the macroscopic solution as well as  count them from the microscopic
description of these black holes. 
The $1/4$ BPS dyonic black holes in ${\cal N}=4$ theory is a system which has been 
extensively studied in this context, see \cite{Sen:2007qy,Dabholkar:2012zz} for reviews. 
The identification of  the degrees of freedom   is complicated by the
fact that   classical solutions of black holes are multi-centered and usually they  contain 
hair degrees of freedom localized outside the horizon 
\cite{Sen:2009vz,Banerjee:2009uk,Jatkar:2009yd}. 
The microscopic  analysis counts all these configuations together. 
Let us make this precise, let $d_{\rm micro} (\vec q) $ be the 
degeneracy or in the case of the extremal supersymmetric black holes  the 
appropriate supersymmetric  index evaluated
from the microscopic description of a BPS state with charge $\vec q$. 
Similarly let $d_{\rm macro} (\vec q) $ be the  corresponding macroscopic index. 
Then
\begin{eqnarray}\label{split}
d_{\rm macro}( \vec q) = 
\sum_{n} \sum_{ \stackrel{ \{\vec q_i \} , \vec q_{\rm hair}}
{\sum_{i=1}^n( \vec q_i + \vec q_{\rm hair} ) = \vec q}  }
\left( \prod_{i = 1}^n d_{\rm hor} ( \vec q_i ) \right) d_{\rm hair} ( \vec q_{\rm hair} ; \{\vec q_i \} )
\end{eqnarray}
Each term on the right hand side of (\ref{split})  is the contribution  to the index 
of say the $n$-centered black hole configuration. $d_{\rm hor} ( \vec q_i)$ is the 
contribution to the index from the horizon degrees of freedom with charge $q_i$ and 
$d_{\rm hair} ( \vec q_{\rm hair} ; \{\vec q_i \} )$  is the index of the hair carrying total 
charge $\vec q_{hair}$  of a $n$-centered black hole whose horizons carry charges 
$\vec q_1, \cdots \vec q_n$. 
We expect 
\begin{eqnarray}\label{basic}
d_{\rm macro} (\vec q) = d_{\rm micro}( \vec q) .
\end{eqnarray}
It would simplify matters if we can restrict our attention to single centred black hole configurations. 
Then  (\ref{split}) indicates that we would need to identify the hair to isolate the horizon degrees 
of freedom.  Since we are dealing with $1/4$  BPS states in ${\cal N}=4$ theories, which break
12 supersymmetries,   the degeneracy  $d(\vec q)$ will refer to the index 
\begin{equation}\label{heltrace}
B_6 = \frac{1}{6!}{\rm Tr} ( (2J)^6 (-1)^{2J}  ) ,
\end{equation}
where $J$ is  the   component of the angular momentum in say the $3$ direction.  The factorized 
form of the Hilbert space corresponding to the hair degrees of freedom and the horizon degrees of freedom follows from the fact the these are well separated due the presence of an infinite throat
\cite{Sen:2009vz}. 

The utility of identifying the horizon degrees of freedom lies in the fact that 
the horizon is spherically symmetric and therefore carries zero momentum $J=0$. 
The index taken over the 
horizon states  reduces to $(-1)^{2J} d_{\rm hor} = d_{\rm hor}$, where $d_{\rm hor}$ is the total 
number of states associated with the horizon.  
Therefore the index of the horizon states must be a positive number. 
This leads to  an important check on the microscopic counting and the equality (\ref{basic}).
Once one determines the hair degrees of freedom for a given macroscopic black hole 
and factors them out of the index, what must remain is a positive number which counts 
the index of the horizon states. 
This argument clearly relies on what are the hair degrees of freedom and this in turn depends 
on the  duality frame of  the macroscopic solution.
This prediction was tested in \cite{Sen:2010mz} with the assumption that there exists a frame 
in which the only hair  degrees of freedom are the fermionic zero modes associated with 
the broken supersymmetry generators. 
For black holes in ${\cal N}=8$ there is evidence towards this fact in 
\cite{Chowdhury:2014yca,Chowdhury:2015gbk}.  These authors 
worked in  
a frame in which the black hole configuration reduced to a system of only  D-branes and showed 
 the only hair degrees of freedom were the fermionic zero modes 
and the BPS configuration indeed had zero angular momentum. 
However such a frame has not  yet been shown to exist for black holes in ${\cal N}=4$ theory. 

Given this situation, one way of proceeding  is to evaluate the partition functions corresponding 
to the hair degrees of freedom and isolate the horizon degrees of freedom in a given frame. 
This has already been done in \cite{Banerjee:2009uk,Jatkar:2009yd}, 
for  $1/4$ BPS dyons in the type IIB frame, 
 but a test of positivity of the index for the resulting 
horizon degrees of freedom has not been done. 
We perform this analysis in this paper and indeed demonstrate that 
the $d_{\rm hor}$ is indeed positive. This is quite remarkable as we will see, since 
factorizing the hair degrees of freedom naively seems to introduce terms with negative contributions
to the index.  We  adapt the proof of \cite{Bringmann:2012zr}  for configurations with 
magnetic charge $P^2 =2$ and demonstrate that the  index is positive. 
We then extend this observation to all the CHL models and to other orbifolds associated 
with Mathieu moonshine introduced in \cite{Chattopadhyaya:2017ews,Persson:2013xpa}. 

In \cite{Chattopadhyaya:2017ews,Chattopadhyaya:2018xvg} 
it was observed that for ${\cal N}=4$ models obtained by freely acting 
$\mathbb{Z}_2, \mathbb{Z}_3$ orbifolds  of type IIB on  $T^6$, 
the index  for single centered configurations after
factorising the sign due to the fermionic zero modes did not obey the expectation $d_{\rm hor}$
is positive \footnote{Please see tables \ref{qp0}, \ref{qp1}, \ref{qp2}  reproducing this observation }. 
But as the above discussion shows,  a possible reason for this could be that the assumption that there
exists a frame in which the fermionic zero modes are the only hair degrees of freedom 
might not be true. 
Therefore we re-examine this question in this paper. 
Following the same procedure used in the CHL models we isolate the hair degrees of freedom 
in the type IIB frame. Then on examining the sign of the index for single centered black holes
we observe that $d_{\rm hor}$  is positive.

The organisation of the paper is as follows. 
In the  section  \ref{sechorstate} we briefly review the statements about the hair 
and the the partition function for the horizon degrees of freedom. 
for the $1/4$ BPS dyonic black hole in type IIB compactified on $K3\times T^2$. 
We then generalise this to all the CHL orbifolds as well as other orbifolds 
associated with Mathieu Moonshine. 
Finally we construct the partition function for the horizon states for the 
toroidal models obtained by freely acting $\mathbb{Z}_2, \mathbb{Z}_3$ orbifolds of
type IIB on $T^6$. 
In section \ref{check}  we perform a consistency check on the $d_{\rm hor}$ obtained. 
This check relies on the fact that the 5-dimensional BMPV black hole has the same near 
horizon geometry \cite{Banerjee:2009uk,Jatkar:2009yd}. Therefore 
$d_{\rm hor}$ for the BMPV black hole should agree with that of the $1/4$ BPS dyon. 
We show that this is indeed the case for all the examples. 
Finally in section \ref{signind}, armed with the $d_{\rm hor}$ for all the models we study the 
positivity of the index for single centered black holes  for all the models. 
We have evaluated numerically 
the indices of horizon states   for several charges in all  the ${\cal N}=4$ models for which 
dyon partition functions are known 
which confirm that  the index is positive. 
We  adapt the proof of \cite{Bringmann:2012zr} to show that the index is positive for charge configurations
with $P^2 = 2$. 
Section  \ref{conclusions}  contains our conclusions.

\section{Horizon states for the $1/4$ BPS dyon} \label{sechorstate}

In this section we construct the partition function for the horizon states for $1/4$ BPS dyons in 
${\cal N}=4$ compactifications. This is done by  identifying the `hair' degrees of freedom which 
are localized outside the horizon.
Such a partition function for the horizon states
 was constructed for the canonical ${\cal N}=4$ theory obtained by 
compactifying type IIB  string theory  $K3\times T^2$ in 
\cite{Banerjee:2009uk,Jatkar:2009yd} in the type IIB frame. 
We review this in section  and then extend the analysis for other ${\cal N}=4$ models. 

The  ${\cal N}=4$ compactifications of interest are type IIB theory on $K3\times T^2/\mathbb{Z}_N$
where $\mathbb{Z}_N$ acts as an automorphisim  $g'$ on $K3$ along with a 
shift of $1/N$ units on one of the circles of $T^2$.  The action of $g'$ can be labelled by the 
$26$ conjugacy classes of the Mathieu group $M_{23}$.  The classes $pA$ with
$ p =2, 3,  5, 6, 7, 8  $  and the class $4B$ are called as Nikulin's automorphism of $K3$. 
They were first introduced in \cite{Chaudhuri:1995ve,Chaudhuri:1995dj} 
as models dual to heterotic string theory with 
${\cal N}=4$ superysmmetry but with gauge groups with reduced from the maximal rank of $28$.
All these compactifications admit $1/4$ BPS dyons,  let $(Q, P)$  be the electric and magnetic 
charge vector of these dyons, then the $1/4$ BPS index $B_6$ is given by 
\cite{Dijkgraaf:1996it,Jatkar:2005bh,David:2006ji,David:2006yn,David:2006ud}
\be \label{b6phi10}
-B_6=\frac{1}{N}(-1)^{Q \cdot P +1}\int_{{\cal C}}{ d}\rho{ d}
\sigma {d}v\; e^{-\pi i ( N \rho Q^2+\sigma P^2/N +2v Q\cdot P)}\frac{1}{\tilde\Phi_k(\rho,\sigma, v)},
\ee
where ${\cal C}$ is a contour in the complex 3-plane defined by 
\begin{eqnarray}\label{contour}
\rho_2 = M_1, \qquad \sigma_2 = M_2, \qquad \tilde v_2 = - M_3, \\ \nonumber
0\leq \rho_1 \leq 1, \qquad 0 \leq \tilde\sigma_1 \leq N, \qquad 0 \leq \tilde v_1 \leq 1.
\end{eqnarray}
Here $\rho= \rho_1 + i \rho_2, \sigma= \sigma_1 + i \sigma_2, v = v_1 + i v_2$ and 
 $M_1, M_2, M_3$ are  positive numbers, which are fixed and large and
$M_3 << M_1, M_2$.  The contour in (\ref{contour}) implies that we first expand
in powers or $e^{2 \pi i \rho}, e^{2\pi i \sigma}$ and at the end perform the expansion 
in $e^{2\pi i v}$.

The Siegel modular  form  $\tilde\Phi(\rho, \sigma, v) $ transforming under $Sp(2,\mathbb{Z})$, 
or its subgroups  for $N>1$  admits an infinite product representation given by 
\bea \label{phi10}
\tilde{\Phi}_k(\rho,\sigma,v)&=&e^{2\pi i(\rho+ \sigma/N+v)} \times \\ \nn
&& \prod_{r=0}^N\prod_{\begin{smallmatrix}k\in \mathbb{Z}+\frac{r}{N},\;l\in \mathbb{Z},\\ j\in \mathbb{Z}\\ k',l\geq 0;\;\; j<0 \;k'=l=0\end{smallmatrix}}(1-e^{2\pi i(k'\sigma+l\rho+jv)})^{\sum_{s=0}^{N-1} c^{(r,s)}(4kl-j^2)}.
\eea
The coefficients $c^{(r,s)}$  are determined from  the expansion of the twisted elliptic genera for 
the various order $N$ orbifolds  $g'$  of $K^3$. 
The twisted elliptic genus of $K3$ is defined by 
\begin{eqnarray}
F^{(r, s) }( \tau, z) &=& \frac{1}{N} {\rm Tr}_{RR \;\; g^{\prime r} } 
\left[ ( -1)^{F_{K3} + \bar F_{K3} } g^{\prime s} e^{2\pi i z F_{K3} }
q^{L_0 - \frac{c}{24} }\bar q ^{\bar L_0 - \frac{\bar c}{24} } \right],   \\ \nonumber
&=& \sum_{ j \in \mathbb{Z} , \;  n \in \mathbb{Z}/N}
c^{( r, s) } ( 4n - j^2) e^{2\pi i n \tau + 2\pi i j z}. \\\ \nonumber
& & \qquad\qquad \qquad \qquad 0\leq r, s\leq N-1.
\end{eqnarray}
The trace is performed over the Ramond-Ramond sector of the ${\cal N}=(4, 4) $  super conformal 
field theory of $K3$ with $( c, \bar c) = ( 6, 6 )$, $F$ is the Fermion number and $j$ is the left
moving  $U(1)$ charge   of the $SU(2)$ $R$-symmetry of $K3$. 
The twisted elliptic genera for the  $g'$ corresponding to conjugacy classes of $M_{23}\cup M_{24}$
have been evaluated in \cite{Chattopadhyaya:2017ews}. 
These take the form
\begin{eqnarray}\label{explielip}
F^{(0, 0)} ( \tau, z) &=& \alpha_{g'}^{(0, 0)} A( \tau, z) ,  \\ \nonumber
F^{(r, s) } ( \tau, z) &=& \alpha_{g'}^{(r, s) } A( \tau, z) + \beta_{g'}^{(r, s) }(\tau)  B(\tau, z) , 
\\ \nonumber
 && \qquad\qquad r, s \in \{0, 1, \cdots N-1 \} \; \hbox{with} ( r, s) \neq (0, 0), 
\end{eqnarray}
where
\begin{eqnarray}\label{ab}
A(\tau, z) &=& \frac{\theta_2^2(\tau,z)}{\theta_2^2(\tau, 0) }+\frac{\theta_3^2(\tau,z)}{\theta_3^2(\tau, 0)}+\frac{\theta_4^2(\tau,z)}{\theta_4^2(\tau, 0) }, \\ \nonumber
B(\tau, z) &=& \frac{\theta_1^2(\tau, z) }{\eta^6(\tau) }.
\end{eqnarray}
The coefficients $\alpha_{g'}^{(r, s)}$ in (\ref{explielip})  are numerical constants, while 
$\beta_{g'}^{(r, s) }(\tau) $ are modular forms that transform under $\Gamma_0(N)$. 
For $g'$ corresponding to conjugacy classes of $M_{23}$, they can be read out 
from  appendix E of \cite{Chattopadhyaya:2017ews}. 
For example, in the case of the $pA$ orbifolds with $p =1,2, 3, 5, 7$, they are given by 
\cite{David:2006ji}. 
\begin{eqnarray}\label{2atwist}
F^{(0, 0)} &=& \frac{8}{N} A(\tau, z) , \\ \nonumber
F^{(0, s)} &=& \frac{8}{(N+1)N} A(\tau, z) - \frac{2}{N+1} B(\tau, z) {\cal E}_N(\tau) , \\ \nonumber
F^{(r,rk)} &=& \frac{8}{N(N+1)} A(\tau, z) + \frac{2}{N(N+1)} B(\tau, z) {\cal E}_N(\frac{\tau+k}{N} ) , 
\\ \nonumber
{\cal E}_N(\tau) &=& \frac{12i}{\pi ( N-1) } \partial_\tau [ \ln \eta(\tau) - \ln \eta( N\tau) ]. 
\end{eqnarray}
For $N$ composite  corresponding to the 
classes $4B, 6A, 8A$, the strategy for construction of the twisted elliptic genus 
was first given in \cite{Govindarajan:2009qt}
 and it was worked out explicitly for the $4B$ example 
 \footnote{Suresh Govindarajan  informed us  that the authors of \cite{Govindarajan:2009qt}
 also explicitly constructed all the sectors of 
  the $6A$ and $8A$ twisted elliptic genera though it was not reported in the paper.}.  
  The papers \cite{Cheng:2010pq,Eguchi:2010fg,Gaberdiel:2010ch} contain the twining characters, 
 $F^{(0, s)}$ and 
 \cite{Gaberdiel:2012gf}  also contains the 
 strategy to construct the twisted elliptic genera
  for other conjugacy classes of $M_{23}$ and a  Mathematica code 
 for generating the elliptic genera. 

The weight of the Siegel modular form $\tilde \Phi( \rho,\sigma, v) $ is given by 
\begin{eqnarray}
k = \frac{1}{2} \sum_{s=0}^{N-1} c^{(0, s)} ( 0 ) .
\end{eqnarray}
For the classes $pA, p = 1, 2, 3, 5, 7, 11$ we have
\begin{equation}
k = \frac{24}{ p +1} - 2,
\end{equation}
for $4B, 6A, 8A$  we have $k = 3, 2, 1$ respectively and for $14A, 15A$ $k=0$. 

Finally, as discussed  in the introduction the study of horizon states would be much 
simpler if one could focus on single centered dyons.  Such a system would 
have only one horizon. 
The choice of the 
contour chosen in (\ref{contour}) together with 
some kinematic  constraints on charges such as (\ref{keres}) ensures that 
we are in  the attractor region of the axion-dilaton moduli
and the index given by (\ref{b6phi10}) is that of single centred dyons \cite{Sen:2007vb,Sen:2010mz}. 
All the indices evaluated in this section paper is done using the contour (\ref{contour}).

\subsection{The canonical example: $K3\times T^2$}

In the work of \cite{Banerjee:2009uk,Jatkar:2009yd} the hair modes  of the $1/4$ BPS dyonic 
black hole in  type IIB theory compactified on $K3\times T^2$ were constructed.  
Here we briefly review this construction.
These modes were shown to be deformations localized outside the horizon and they preserved
supersymmetry. 
Let us first recall that the dyonic black hole in 4-dimensions is constructed by placing the 
$5$ dimensional BMPV black hole or the rotating  D1-D5 system  \cite{Breckenridge:1996is}
 in Taub-Nut space \cite{Gaiotto:2005gf}. The Taub-Nut space 
has the geometry which at the origin is $R^4$ but  at infinity it is  $R^3\times \tilde S $. 
The isometry along  $S^1$ coincides with the angular direction the BMPV rotates.
The hair modes arise from the collective modes of the D1-D5 system thought of as
an effective string along say the $x^5$ and the time  $t$ directions.
Therefore these modes are oscillations of the effective string, they 
are left moving since they have to preserve supersymmetry \footnote{It is easy to see from the heterotic frame that only left moving oscillations preserve supersymmetry.} 
After allowing the fermionic zero modes associated with the $12$ broken susy generators 
to saturate $(2J)^6/6!$ in the helicity trace given in (\ref{heltrace}), the 
non-trivial hair modes consist of 
\begin{itemize}
	\item 4 left moving fermionic modes  arising from the deformations of the 
	gravitino  giving rise to the contribution 
	\begin{equation}
	Z_{{\rm hair}:1A}^{ 4d: f }  = \prod_{l=1}^\infty  (  1- e^{2\pi i l \rho} )^4
	\end{equation}
	\item 3 left moving bosonic modes associated with the oscillation of the effective
	 string in the 3 transverse directions 
	 $\mathbb{R}^3$ as Taub-NUT is assymptotically $\mathbb{R}^3\times \tilde S^1$.
	\begin{equation}
	Z_{{\rm hair}: 1A}^{ 4d: { \perp} }  =  \prod_{l=1}^\infty  \frac{1}{( 1- e^{ 2\pi i l \rho} ) ^3}
	\end{equation}
	\item 21 left moving bosonic modes, these arise from the  deformations of the 
	21 anti-self-dual  forms of type IIB on K3. These deformations involve $21$ scalar functions
	folded with the $2$ form $d\omega_{TN}$ on the Taub-Nut given by 
	\begin{equation}\label{defhv}
	\delta H^s = h^s ( v) dv \wedge d\omega _{TN}, \qquad v = t + x^5, \quad s = 1, \cdots 21
	\end{equation}
	Counting these oscillations we obtain
	\begin{equation}
	Z_{{\rm hair}: 1A}^{4d:{\rm asd} } = \prod_{l=1}^\infty  \frac{1}{( 1- e^{ 2\pi i l \rho} ) ^{21}}.
	\end{equation}
	The $21$ anti-self dual forms arise from compactifying the RR 4-form on the 
	$19$ anti-self dual $2$ form of the $K3$ together with the  NS 2-form and the RR 2-form 
	of type IIB. 
\end{itemize}
Note that in the partition function we labelled the chemical potential to count the osciallations
by $\rho$, this is because exciting these  left moving momentum modes   
correspond to exciting the electric charge 
of the dyon \cite{David:2006yn}.  
Now combining these partition functions we obtain 
\begin{eqnarray} \label{k3h}
Z_{ {\rm hair } : 1A}^{4d} &=& Z_{{\rm hair}:1A}^{ 4d: f }\times Z_{{\rm hair}: 1A}^{ 4d: { \perp} } 
\times Z_{{\rm hair}: 1A}^{4d:{\rm asd} }  \\ \nonumber
& =& \prod_{l=1}^{\infty}(1-e^{2\pi i (l\rho)})^{-20}.
\end{eqnarray} 
 The Bosonic  hair partition function  is given by 
 \begin{equation}
 Z_{{\rm hair}: 1A }^{ \rm bosons}= Z_{{\rm hair}: 1A}^{ 4d: { \perp} } 
\times Z_{{\rm hair}: 1A}^{4d:{\rm asd} } 
  = \frac{ e^{2\pi i \rho}}{ \eta^{24} ( \rho) }, 
 \end{equation}
 this  is identical to that of the 
 counting the degeneracy of purely electric states  in  this model without the zero point energy. 
 This observation will help in the generalizations to CHL models.

To obtain the partition function of horizon states we factor out the hair degrees of freedom 
resulting in 
\begin{equation}
Z_{{\rm hor}} = \frac{ 1}{ \Phi_{10} ( \rho, \sigma, v)  Z_{ {\rm hair } : 1A}^{4d}  }.
\end{equation}
 The  index for the horizon states  can be
 then be obtained   by extracting the Fourier coefficients  using the expression given by 
 \be \label{dhork3}
d_{hor}=-(-1)^{Q \cdot P}\int_{{\cal C}}{ d}\rho{ d}
\sigma { d}v\; e^{-\pi i (\rho Q^2+\sigma P^2+2v Q\cdot P)}\frac{1}{\tilde\Phi_{10}(\rho,\sigma, v)}\prod_{l=1}^{\infty}(1-e^{2\pi i (l\rho)})^{20}.
\ee
Here the contour ${\cal C}$ is same as that defined in (\ref{contour}). 

\subsection{Orbifolds of $K3\times T^2$}

\subsubsection*{$2A$ orbfiold}

Before we present the analysis for the most general orbifold, let us examine in detail 
the analysis for the $2A$ orbifold.  In this case, the orbifold acts by exchanging $8$ pairs
of anti-self dual  $(1, 1)$ forms out of the $19$ anti-self dual forms of $K3$ with the $1/2$ shift 
on $S^1$ \cite{Chaudhuri:1995dj}. 
Note that because of the $1/2$ shift, the natural unit of momentum on $S^1$ is $N=2$. 
With this input we are ready to repeat the analysis for the partition function of the hair modes
\begin{itemize}
\item 
The 4 left moving fermionic modes  arising from the deformations of the 
	gravitino  give rise to the contribution 
	\begin{equation}
	Z_{{\rm hair}:2A}^{ 4d: f }  = \prod_{l=1}^\infty  (  1- e^{4\pi i l \rho} )^4.
	\end{equation}
	Note that due to the fact that the periodicity is now $\frac{2\pi }{N}$, the unit of momentum 
	is doubled. 
\item The $3$ transverse bosonic deformations along $R^3$   of the effective string  results in 
\begin{equation}
Z_{{\rm hair}: 2A}^{ 4d: { \perp} }  =  \prod_{l=1}^\infty  \frac{1}{( 1- e^{ 4\pi i l \rho} ) ^3}.
\end{equation}
\item The action of the orbifold projects out $8$ anti-self dual forms. The  analysis for $13 = 11 + 2$.
\footnote
{The 2 arises from the anti-self dual component of the RR 2-form and the NS 2-form.}
 invariant 
anti-self dual forms  proceeds as before except for the fact that the unit of momentum is $2$
\begin{equation}
Z_{{\rm hair}: 2A}^{4d:{\rm asd} } |_{\rm invariant} = \prod_{l=1}^\infty  \frac{1}{( 1- e^{ 4\pi i l \rho} ) ^{13}}.
\end{equation}
Consider the following boundary conditions of the function $h(s)$
in (\ref{defhv})  for the $8$ projected 
anti-self dual forms. 
\begin{equation}
h(v + \frac{2\pi}{N} ) = - h( v) , \qquad \qquad    N=2.
\end{equation}
These deformation pick up sign when one move by $1/2$ unit on $S^1$. The partition function 
corresponding to these modes is given by 
\begin{equation}
 Z_{{\rm hair}: 2A}^{4d:{\rm asd} } |_{\rm twisted} = 
 \prod_{l=1}^\infty  \frac{1}{( 1- e^{ 2\pi i (2l  -1) \rho} ) ^{8}}.
 \end{equation}
 Note that these modes are twisted for the circle of radius $2\pi/N, N =2$, they obey 
 anti-periodic  boundary conditions. However in  supergravity 
 periodicities are measured over  the circle of radius $2\pi $  and they are periodic  for this radius, 
  therefore these modes  can be  counted as hair modes. 
 Together, the contribution of the anti-self dual forms to the partition function 
 is given by 
 \begin{eqnarray}
 Z_{{\rm hair}: 2A}^{4d:{\rm asd} } &=& Z_{{\rm hair}: 2A}^{4d:{\rm asd} } |_{\rm invariant} 
 \times Z_{{\rm hair}: 2A}^{4d:{\rm asd} } |_{\rm twisted} , \\ \nonumber
 &=&  \prod_{l=1}^\infty \frac{1}{( 1- e^{ 4\pi i l \rho})^5}
  \prod_{l=1}^\infty  \frac{1}{( 1- e^{ 2\pi i l \rho} ) ^{8}}
  \end{eqnarray}
  \end{itemize}
  Now combining all the hair modes we obtain
  \begin{eqnarray}
  Z_{ {\rm hair } : 2A}^{4d} &=& Z_{{\rm hair}:2A}^{ 4d: f }\times Z_{{\rm hair}: 2A}^{ 4d: { \perp} } 
\times Z_{{\rm hair}: 2A}^{4d:{\rm asd} }  \\ \nonumber
& =& \prod_{l=1}^{\infty}(1-e^{4\pi i l\rho)})^{-4} ( 1 - e^{2\pi i l \rho} )^{-8} .
\end{eqnarray} 
Observe that the  partition function of the bosonic hair modes is given by 
\begin{eqnarray}
 Z_{{\rm hair}: 2A }^{ 4d:\;  b} &=& Z_{{\rm hair}: 2A}^{ 4d: { \perp} } 
\times Z_{{\rm hair}: 2A}^{4d:{\rm asd} } , \\ \nonumber
&=& \prod_{l =1}^\infty ( 1- e^{ 4\pi i l \rho } )^{-8} ( 1- e^{2\pi i l \rho} )^{-8} , \\ \nonumber
&=& \frac{e^{2\pi i \rho} }{ \eta^8 ( 2 \rho)\eta^{8} ( \rho) } .
\end{eqnarray}
This is the partition  function of the fundamental string in the $N=2$ CHL orbifold of the 
heterotic theory  with the zero point energy removed \cite{Dabholkar:2005by,David:2006yn}.

\subsubsection*{$pA$ orbifolds $p = 2, 3, 5, 7$}

The construction of the hair modes for the case of orbifolds of prime order, the 
method proceeds as discussed in detail for the $2A$ orbifold. 
In each case we need to count the number of $2$-forms which are left invariant and which pick up phases
and evaluate the partition function.
The result for the bosonic hair modes is given by
\be\label{bosonic}
Z_{{\rm hair} : \; pA }^{4d: \; b}
=\prod_{l=1}^{\infty}\frac{1}{(1-e^{2\pi i \rho N l})^{k+2}(1-e^{2\pi i l\rho})^{k+2}}.
\ee
where 
\begin{equation}
k = \frac{24}{ p +1} - 2.
\end{equation}
Note that this is the partition of the states containing only the electric charges or the fundamental 
string without the zero point energy \cite{David:2006yn}. 
Now including the $4$ fermionic deformations we obtain 
\begin{eqnarray}\label{chlh}
Z_{{\rm hair}:\;  pA}^{4d} &=&\prod_{l=1}^{\infty}(1-e^{2\pi i (Nl\rho)})^{-(k+2)}(1-e^{2\pi i (l\rho)})^{-(k+2)}(1-e^{2\pi i (Nl\rho)})^{4}\\ \nn
&=& \prod_{l=1}^{\infty}(1-e^{2\pi i (Nl\rho)})^{-{2k}}\prod_{N\nmid l}(1-e^{2\pi i (l\rho)})^{-(k+2)}
\end{eqnarray}
It is useful to rewrite this expression as follows
\begin{eqnarray}\label{chlall}
Z_{{\rm hair}:\; pA}^{4d}&=&\prod_{l=1}^{\infty}
(1-e^{2\pi i (Nl\rho)})^{-\sum c^{(0,s)}(0)}\prod_{N\nmid l}(1-e^{2\pi i (l\rho)})^{-\sum e^{-2\pi i sl/N}c^{(0,s)}(0)}\\ \nn
&=& \prod_{l\ne 0}(1-e^{2\pi i (l\rho)})^{-\sum e^{-2\pi i sl/N}c^{(0,s)}(0)}
\end{eqnarray}
The sum is on the range of $s=0$ to $N-1$ and $N\nmid l$ implies $N$ does not divide $l$.

The values of $\sum_{s=0}^{N-1}e^{-2\pi i s l/N}c^{(0,s)}(-b^2)$ for prime $N$ are listed in 
table \ref{tablep}
\begin{table}[H]
	\renewcommand{\arraystretch}{0.5}
	\begin{center}
		\vspace{0.5cm}
		\begin{tabular}{|c|c|c|c|}
			\hline
			& & &  \\
			$N$ & $l$ & $-b^2$ & $\sum_{s=0}^{N-1}e^{-2\pi i sl/N}c^{(0,s)}(-b^2)$\\ 
			& & &  \\
		\hline
			& & & \\
$p$ & $N|l$ & 0 & $2k=\frac{48}{N+1}-4$\\
 & & & \\
 & & $-1$ & 2\\
 	\cline{2-4}
  & & & \\
  & $N\nmid l$ & 0 & $k+2=\frac{24}{N+1}$\\
& & & \\
  & & $-1$ & 0\\
\hline			
\end{tabular}
\end{center}
\vspace{-0.5cm}
\caption{Values of  $\sum_{s=0}^{N-1}e^{-2\pi i s l/N}c^{(0,s)}(-b^2)$ for orbifolds of $K3$ with  prime
order \\ $(N=p)$} \label{tablep}
\renewcommand{\arraystretch}{0.5}
\end{table}

\subsubsection*{Orbifolds of composite order: $4B, 6A, 8A$ }

One can count the hair modes in a similar fashion as the  orbifolds of prime order.
The only difference would arise for the bosonic modes 
$Z_{\rm hair}^{\rm bosons}$,  which needs to be replaced by the fundamental string in these theories without the zero point energy. Including the  4 fermionic hairs, we see that the answer 
can be written in the same form as that seen for orbifolds with prime order. 
\begin{eqnarray}
Z_{{\rm hair}: {\rm CHL}}^{4d} &=&\prod_{l=1}^{\infty}(1-e^{2\pi i (Nl\rho)})^{-\sum c^{(0,s)}(0)}\prod_{N\nmid l}(1-e^{2\pi i (l\rho)})^{-\sum e^{-2\pi i sl/N}c^{(0,s)}(0)}.
\end{eqnarray}
The sum  ranges from  $s=0$ to $N-1$.
This can be  rewritten as 
\begin{eqnarray}
Z_{{\rm hair}: {\rm CHL}}^{4d}&=&\prod_{l=1}^{\infty}(1-e^{2\pi i (l\rho)})^{-\sum e^{-2\pi i sl/N}c^{(0,s)}(0)}.
\end{eqnarray}

For the geometric CHL orbifolds ,  we list 
$\sum_{s=0}^{N-1}e^{-2\pi i s l/N}c^{(0,s)}(-b^2)$ for different $N=4,6,8$ in table \ref{tablenonp}
\begin{table}[H]
	\renewcommand{\arraystretch}{0.5}
	\begin{center}
		\vspace{0.5cm}
		\begin{tabular}{|c|c|c|c|}
			\hline
			& & &  \\
			$N$ & $l$ & $-b^2$ & $\sum_{s=0}^{N-1}e^{-2\pi i sl/N}c^{(0,s)}(-b^2)$\\ 
			& & &  \\
			\hline
			& & & \\
			$4$ & $4|l$ & 0 & $6$\\
			& & & \\
			& & $-1$ & 2\\
			\cline{2-4}
			& & & \\
			 & $2| l,\; 4\nmid l$ & 0 & 6\\
			& & & \\
			\cline{2-4}
				& & & \\
			& $2\nmid l$ & 0 & 4\\
			& & & \\
			\hline  \hline	
				& & & \\
			$6$ & $6|l$ & 0 & $4$\\
			& & & \\
			& & $-1$ & 2\\
			\cline{2-4}
			& & & \\
			& $2| l,\; 6\nmid l$ & 0 & 4\\
			& & & \\
			\cline{2-4}
				& & & \\
			& $3| l,\; 6\nmid l$ & 0 & 4\\
			& & & \\
			\cline{2-4}
			& & & \\
			& $2\nmid l, 3\mid l$ & 0 & 2\\
			& & & \\
			\hline	\hline		
						& & & \\
				$8$ & $8|l$ & 0 & $2$\\
				& & & \\
				& & $-1$ & 2\\
				\cline{2-4}
				& & & \\
				& $2| l,\; 4\nmid l$ & 0 & 3\\
				& & & \\
				\cline{2-4}
				& & & \\
				& $4| l,\; 8\nmid l$ & 0 & 4\\
				& & & \\
				\cline{2-4}
				& & & \\
				& $2\nmid l$ & 0 & 2\\
				& & & \\
				\hline			
		\end{tabular}
	\end{center}
	\vspace{-0.5cm}
	\caption{Values of  $\sum_{s=0}^{N-1}e^{-2\pi i s l/N}c^{(0,s)}(-b^2)$ for non-prime CHL orbifolds of $K3$. $\sum_{s=0}^{N-1}e^{-2\pi i s l/N}c^{(0,s)}(-1)=0$ if $N\nmid l$ for any of these cases.} \label{tablenonp}
	\renewcommand{\arraystretch}{0.5}
\end{table}
Using the  data from table \ref{tablenonp} we obtain
\begin{eqnarray}\label{chlcom}
Z_{{\rm hair}: 4B}^{4d}  &=&\prod_{l=1}^{\infty}(1-e^{2\pi i (4l\rho)})^{4}
(1-e^{2\pi i (4l\rho)})^{-4}(1-e^{2\pi i (2l\rho)})^{-2}(1-e^{2\pi i (l\rho)})^{-4}\\ \nn
&=& \prod_{l=1}^{\infty}(1-e^{2\pi i (2l\rho)})^{-2}(1-e^{2\pi i (l\rho)})^{-4}\\ \nn
Z_{{\rm hair}: 6A}^{4d}&=&\prod_{l=1}^{\infty}(1-e^{2\pi i (6l\rho)})^{4}
(1-e^{2\pi i (6l\rho)})^{-2}(1-e^{2\pi i (2l\rho)})^{-2}(1-e^{2\pi i (3l\rho)})^{-2}(1-e^{2\pi i (l\rho)})^{-2}\\ \nn
&=&\prod_{l=1}^{\infty}(1-e^{2\pi i (6l\rho)})^{2}
(1-e^{2\pi i (2l\rho)})^{-2}(1-e^{2\pi i (3l\rho)})^{-2}(1-e^{2\pi i (l\rho)})^{-2}\\ \nn
Z_{{\rm hair}: 8A}^{4d}&=&\prod_{l=1}^{\infty}(1-e^{2\pi i (8l\rho)})^{4}
(1-e^{2\pi i (8l\rho)})^{-2}(1-e^{2\pi i (2l\rho)})^{-1}(1-e^{2\pi i (4l\rho)})^{-1}(1-e^{2\pi i (l\rho)})^{-2}\\ \nn
&=&\prod_{l=1}^{\infty}(1-e^{2\pi i (8l\rho)})^{2}
(1-e^{2\pi i (2l\rho)})^{-1}(1-e^{2\pi i (4l\rho)})^{-1}(1-e^{2\pi i (l\rho)})^{-2}.
\end{eqnarray}

\subsubsection*{Horizon states}

We factor out the hair degrees of freedom to obtain the horizon states, this is given by 
\be
Z_{{\rm hor}:{\rm CHL} }^{4d}=-\frac{1}{\tilde\Phi_k(\rho,\sigma, v)}\prod_{l=1}^{\infty}(1-e^{2\pi i (l\rho)})^{\sum_s e^{-2\pi i sl/N}c^{(0,s)}(0)}
\ee
It is useful to  use the product form of $\tilde\Phi_k$ given in (\ref{phi10}) to 
rewrite the partition function of the horizon states as follows
\begin{eqnarray} \label{horparti}
Z_{{\rm hor}:{\rm CHL} }^{4d} &=&
-e^{-2\pi i(\rho+\sigma/N+v)} \prod_{r=0}^{N-1}\prod_{\begin{smallmatrix}k'\in \mathcal{Z}+r/N,l\in \mathcal{Z},\\ j\in \mathcal{Z}\\ k'>0,l\ge 0\end{smallmatrix}}(1-e^{2\pi i(k'\sigma+ l\rho+jv)})^{-\sum_s e^{-2\pi isl/N}c^{(r,s)}(4k'l-j^2)} \nn \\
&&\times \prod_{l=1}^{\infty}(1-e^{2\pi i (Nl\rho+v)})^{-2}\prod_{l=1}^{\infty}(1-e^{2\pi i (Nl\rho-v)})^{-2} (1-e^{-2\pi i v})^{-2}\\ \nn
&=&-e^{-2\pi i(\rho+\sigma/N)} 
\prod_{r=0}^{N-1} \prod_{\begin{smallmatrix}k'\in \mathcal{Z}+r/N,l\in \mathcal{Z},\\ j\in \mathcal{Z}\\ k' >0,l\ge 0\end{smallmatrix}}(1-e^{2\pi i(k'\sigma+ l\rho+jv)})^{-\sum_s e^{-2\pi isl/N}c^{(r,s)}(4k'l-j^2)}\\ \nn
&&\times \prod_{l=1}^{\infty}(1-e^{2\pi i (Nl\rho+v)})^{-2}\prod_{l=1}^{\infty}(1-e^{2\pi i (Nl\rho-v)})^{-2} (e^{\pi i v}-e^{-\pi i v})^{-2}.
\end{eqnarray}
This form of the horizon partition function is useful in the  next section. 
 The index for the  horizon states is given by 
\begin{eqnarray}
d_{{\rm hor}:{\rm CHL} }&=& -(-1)^{Q \cdot P}\int_{{\cal C}}{d}\rho{ d}
\sigma { d}v\; e^{-\pi i (N\rho  Q^2+\sigma P^2/N+2v Q\cdot P)}  \frac{1}{\tilde\Phi_k(\rho,\sigma, v)}\times  \nonumber \\
& & \qquad\qquad\qquad
\prod_{l=1}^{\infty}(1-e^{2\pi i (l\rho)})^{\sum_s e^{-2\pi i sl/N}c^{(0,s)}(0)}.
\end{eqnarray}

\subsubsection*{Non-geometric orbifolds: $11A, 14A, 15A, 23A$ }

For completeness we note that we can extend the counting of hair modes to $g'$ orbifolds of $K3$ where $g'$ corresponds all the remaining  conjugacy classes of $M_{23}$. The CHL orbifolds also form a part of these, however the ones discussed in this section are non-geometric.
The hair modes in these cases can also be written as:
\begin{eqnarray}
Z_{{\rm hair}: g'}^{4d}&=& \prod_{l\ne 0}(1-e^{2\pi i (l\rho)})^{-\sum e^{-2\pi i sl/N}c^{(0,s)}(0)}
\end{eqnarray}
To be explicit,  we list 
 list of values of $\sum_{s=0}^{N-1}e^{-2\pi i s l/N}c^{(0,s)}(-b^2)$ for different $N$ for $N=11,14,15,23$
 in table \ref{tablenong}. 
\begin{table}[H]
	\renewcommand{\arraystretch}{0.5}
	\begin{center}
		\vspace{0.5cm}
		\begin{tabular}{|c|c|c|c|}
			\hline
			& & &  \\
			$N$ & $l$ & $-b^2$ & $\sum_{s=0}^{N-1}e^{-2\pi i sl/N}c^{(0,s)}(-b^2)$\\ 
			& & &  \\
			\hline
			& & & \\
			$11$ & $11|l$ & 0 & $0$\\
			& & & \\
			& & $-1$ & 2\\
			\cline{2-4}
			& & & \\
			& $11\nmid l$ & 0 & 2\\
			& & & \\
			\hline	\hline
			& & & \\
			$14$ & $14|l$ & 0 & $0$\\
			& & & \\
			& & $-1$ & 2\\
			\cline{2-4}
			& & & \\
			& $2| l,\; 7\nmid l$ & 0 & 2\\
			& & & \\
			\cline{2-4}
			& & & \\
			& $7| l,\; 2\nmid l$ & 0 & 2 \\
			& & & \\
			\cline{2-4}
			& & & \\
			& $2\nmid l, 7\nmid l$ & 0 & 1\\
			& & & \\
			\hline			\hline
			& & & \\
			$15$ & $15|l$ & 0 & $0$\\
			& & & \\
			& & $-1$ & 2\\
			\cline{2-4}
			& & & \\
			& $3| l,\; 5\nmid l$ & 0 & 2\\
			& & & \\
			\cline{2-4}
			& & & \\
			& $5| l,\; 3\nmid l$ & 0 & 2\\
			& & & \\
			\cline{2-4}
			& & & \\
			& $3\nmid l, 5\nmid l$ & 0 & 1\\
			& & & \\
			\hline	\hline	
					& & & \\
			$23$ & $23|l$ & 0 & $-2$\\
			& & & \\
			& & $-1$ & 2\\
			\cline{2-4}
			& & & \\
			& $23\nmid l$ & 0 & 1\\
			& & & \\
			\hline		
		\end{tabular}
	\end{center}
	\vspace{-0.5cm}
	\caption{Values of  $\sum_{s=0}^{N-1}e^{-2\pi i s l/N}c^{(0,s)}(-b^2)$ for non-geometric orbifolds of $K3$ where $g'\in [M_{23}]$. $\sum_{s=0}^{N-1}e^{-2\pi i s l/N}c^{(0,s)}(-1)=0$ if $N\nmid l$ for any of these cases.} \label{tablenong}
	\renewcommand{\arraystretch}{0.5}
\end{table}
Using the results from table \ref{tablenong} we write:
\begin{eqnarray}\label{m23}
Z_{{\rm hair}: 11A}^{4d}&=&\prod_{l=1}^{\infty}(1-e^{2\pi i (11l\rho)})^{4}(1-e^{2\pi i (l\rho)})^{-2}(1-e^{2\pi i (11l\rho)})^{-2}\\
Z_{{\rm hair}:14A}^{4d}&=&\prod_{l=1}^{\infty}(1-e^{2\pi i (14l\rho)})^{4}(1-e^{2\pi i (14l\rho)})^{-1}\\ \nn
&&(1-e^{2\pi i (2l\rho)})^{-1}(1-e^{2\pi i (7l\rho)})^{-1}(1-e^{2\pi i (l\rho)})^{-1}\\ 
Z_{{\rm hair}:15A}^{4d}&=&\prod_{l=1}^{\infty}(1-e^{2\pi i (15l\rho)})^{4}(1-e^{2\pi i (15l\rho)})^{-1}\\ \nn
&&(1-e^{2\pi i (3l\rho)})^{-1}(1-e^{2\pi i (5l\rho)})^{-1}(1-e^{2\pi i (l\rho)})^{-1}\\
Z_{{\rm hair}: 23A}^{4d}&=&\prod_{l=1}^{\infty}(1-e^{2\pi i (23l\rho)})^{4}(1-e^{2\pi i (23l\rho)})^{-1}
(1-e^{2\pi i (l\rho)})^{-1}
\end{eqnarray}

The partition function of the horizon states in these models are given by 
the same expressions as in (\ref{horparti}) with $N$  
replaced by the order of the conjugacy class
and the coefficients $c^{(r, s)}$ read out out from the respective twisted elliptic genus. 
Let us conclude by writing the general formula for the horizon states as 
\begin{eqnarray}\nn
Z_{\rm{hor}: \; g'}^{4d}&=&-e^{-2\pi i(\rho+\sigma/N)} 
\prod_{r=0}^{N-1} \prod_{\begin{smallmatrix}k'\in \mathcal{Z}+r/N,l\in \mathcal{Z},\\ j\in \mathcal{Z}\\ k' >0,l\ge 0\end{smallmatrix}}(1-e^{2\pi i(k'\sigma+ l\rho+jv)})^{-\sum_s e^{-2\pi isl/N}c^{(r,s)}(4k'l-j^2)}\\ \label{horparti2}
&&\times \prod_{l=1}^{\infty}(1-e^{2\pi i (Nl\rho+v)})^{-2}\prod_{l=1}^{\infty}(1-e^{2\pi i (Nl\rho-v)})^{-2} (e^{\pi i v}-e^{-\pi i v})^{-2}
\end{eqnarray}

\subsection{Toroidal orbifolds}

In this section we construct the hair for ${\cal N}=4$ theories obtained by freely acting
 $\mathbb{Z}_2, \mathbb{Z}_3$ involutions on $T^6$ \cite{Sen:1995ff}. 
 Let us first briefly recall how these are constructed. 
 In the type IIB frame, they are obtained by 
$4$ of the co-ordinates together with a half shift along one of the $S^1$. 
The type IIA description of the theory is that of a freely acting orbifold with the action of $(-1)^{F_L}$ 
and a $1/2$ shift along one of the circles of $T^6$.\footnote{For details of these descriptions and 
the dyon configuration refer \cite{David:2006ji}.} 
A similar compactification of order 3 given by 
 a $2\pi/3$ rotation along one 2D plane of $T^4$ and a $-2\pi/3$ rotation 
along another plus an $1/3$ shift along one of the circles of
$T^2$ was also discussed in \cite{David:2006ji}. 
We call these models $\mathbb{Z}_2$ and $\mathbb{Z}_3$ toroidal orbifolds. 

One key property of these models to keep in mind which will be important is that
the breaking of the 32 supersymmetries  of type IIB to 16  is determined by the size of $S^1$. 
This was not the case for the orbifolds of $K3\times T^2$, where supersymmetry was 
broken by the $K3$.  For the toroidal models if the size of $S^1$ is infinite,  the theory 
effectively 
behaves as though the theory has 32 supersymmetries. 
We will use this fact to propose certain fermionic zero modes which were present 
for the CHL models will become singular at the horizon. 

The dyon partition function for the toroidal models  is given by \citep{David:2006ru}:
\begin{eqnarray}\label{siegform2}
\tilde{\Phi}_k(\rho,\sigma,v)&=&e^{2\pi i(\rho+  v)}\\ \nn
&&\prod_{b=0,1}\prod_{r=0}^{N-1}
\prod_{\begin{smallmatrix}k'\in \mathbb{Z}+
	\frac{r}{N},l\in \mathcal{Z},\\ j\in 2\mathbb{Z}+b\\ k',l\geq0, \; j<0\;  k'=l=0\end{smallmatrix}}
(1-e^{2\pi i(k'\sigma+l\rho+jv)})^{\sum_{s=0}^{N-1}e^{2\pi isl/N}c_b^{r,s}(4k'l-j^2)}.
\end{eqnarray}
The 
coefficients $c^{(r,s)}$ are read out from the following twisted  elliptic genus for $\mathbb{Z}_2$ orbifold:
\begin{eqnarray}\label{2tortwist}
F^{(0, 0)} &=& 0 , \\ \nonumber
F^{(0, 1)} &=& \frac{8}{3} A(\tau, z) - \frac{4}{3} B(\tau, z) {\cal E}_2(\tau) 
, \\ \nonumber
F^{(1, 0)} &=& \frac{8}{3} A(\tau, z) + \frac{2}{3} B(\tau, z) {\cal E}_2(\frac{\tau}{2} ), 
\\ \nonumber
F^{(1, 1)} &=& \frac{8}{3}A(\tau, z) + \frac{2}{3} B(\tau, z) {\cal E}_2( \frac{\tau +1}{2} ).
\end{eqnarray} 
The corresponding Siegel 
form of weight $k =2$ can be written as 
\begin{equation}\label{phitwo}
\tilde \Phi_2 (\rho, \sigma, v) = \frac{ \tilde \Phi_{6}^2(\rho, \sigma, v) }{\tilde \Phi_{10}(\rho, \sigma, v) },
\end{equation}
where $\tilde\Phi_6$ is the weight $6$ Siegel modular form associated with the order 2  CHL orbifold. 
For the $\mathbb{Z}_3$ toroidal case the twisted elliptic genus is 
given by 
\begin{eqnarray} \label{3tortwist}
F^{(0, 0)} &=& 0 \\ \nonumber
F^{(0,s)}&=&A(\tau,z)-\frac{3}{4}B(\tau,z){\cal E}_3(\tau) \\ \nonumber
F^{(r,rk)}&=&A(\tau,z)+\frac{1}{4}B(\tau,z){\cal E}_3(\frac{\tau+k}{3}), \quad {r=1,2}.
\end{eqnarray} 
The Siegel modular form associated with the $\mathbb{Z}_3$ toroidal orbifold has weight $k=1$ and is given by 
\begin{equation}\label{phi1}
\tilde \Phi_1 (\rho, \sigma, v)  = \frac{\tilde\Phi_4^{3/2} (\rho, \sigma, v)}{\tilde\Phi_{10}^{1/2}(\rho, \sigma, v) },
\end{equation}
where $\tilde\Phi_4$ is the weight $4$ Siegel modular form associated with the order 3 CHL orbifold.

Let us construct the hair modes and horizon states  for these models. 

\subsubsection*{$T^6/\mathbb{Z}_2$ model}
\begin{itemize}
\item  Just as in the case of the CHL models, we have $4$ left moving fermions. 
This gives rise to 
\begin{equation}
Z_{{\rm hair}: T^6/\mathbb{Z}_2}^{4d: f }= \prod_{l =1}^\infty ( 1- e^{2\pi i (2l) \rho})^4.
\end{equation}
\item The deformations corresponding to the motion of the 
effective string in the  $3$ transverse  directions of $R^3\times \tilde S^1$ of the Taub-Nut space
together with the fluctuations of the anti-self dual forms can be determined easily by 
examining the  partition function of the fundamental string in this theory and removing the zero 
point energy. This partition function was determined in \cite{David:2006ji}, using this result we 
obtain \footnote{One can also obtain this by counting the number of invariant $2$-forms and the 
forms which pick up a phase as done in \cite{David:2006ud}. } 
\begin{equation}
Z_{{\rm hair}: T^6/\mathbb{Z}_2}^{4d\; b}  = \prod_{l =1}^\infty 
\left[ ( 1- e^{2\pi i (2l -1) \rho})^{8 }  ( 1- e^{4\pi i l  \rho})^{-8 } \right].
\end{equation}
\item Contribution of the zero modes:  
The quantum mechanics of the bosonic zero modes describing the motion of the D1-D5 system 
in the Taub-Nut 
result in the following partition function  \cite{David:2006yn}
\begin{equation}\label{torzero}
Z_{{\rm hair}: T^6/\mathbb{Z}_2}^{4d:\;{\rm zero modes}} =- e^{2\pi i v} ( 1- e^{2\pi i v} )^{-2} .
\end{equation}
For orbifolds of $K3$, this contribution from the bosonic zero modes 
 was cancelled by the zero modes of $4$ fermions from the
right moving sector carrying angular momentum $J= \pm \frac{1}{2}$ whose partition function is given by  
$-( e^{\pi i v} - e^{-\pi i v})^2$ \cite{Banerjee:2009uk}. 
However for the toroidal model,  we propose that these zero modes do not form part of the hair. 
They are either singular at the horizon or they are not localized outside the horizon. 
This is possible, 
 the fact that 
we are in a theory with $16$ supersymmetries  is tied
to the the radius of $S^1$.  Verification of this proposal would involve a detailed study of the 
zero mode wave functions which we leave for  the future. 
However we will  perform consistency checks of this proposal in section \ref{signind}.
by evaluating the index of the horizon states. 
\end{itemize}
Thus the hair modes of the $\mathbb{Z}_2$ toroidal model is given by 
\begin{equation}\label{4dtorhair1}
Z_{{\rm hair}: T^6/\mathbb{Z}_2}^{4d} =  -( e^{\pi i v} - e^{-\pi i v})^{-2}  \prod_{l =1}^\infty 
\left[ ( 1- e^{2\pi i (2l -1) \rho})^{8 }  ( 1- e^{4\pi i l  \rho})^{-4 } \right].
\end{equation}
The  partition function of the horizon states of this model are given by 
\begin{equation}
Z_{{\rm hor}: T^6/\mathbb{Z}_2}^{4d} = -\frac{1}{\tilde \Phi_2( \rho, \sigma, v)  
Z_{ {\rm hair}: T^6/\mathbb{Z}_2}^{4d}  }.
\end{equation}
where $\tilde\Phi_2(\rho, \sigma, v) $ is given in (\ref{phitwo}) or (\ref{siegform2}).

The toroidal model has another special feature, they admit Wilson lines along $T^4$ \cite{David:2006ru}, 
their partition function is given by 
\begin{eqnarray}\nn
Z_{{\rm Wilson}: T^4/\mathbb{Z}_2} =\prod_{l=1}^\infty \left[
(1- e^{2\pi i  (2l -1) \rho  + 2\pi i v })^2 (1- e^{2\pi i  (2l -1) \rho  - 2\pi i v })^2
(1- e^{2\pi i  (2l -1) \rho   })^{-4} \right]\\ \label{wilson1}.
\end{eqnarray}
It is possible that the Wilson lines might also be part of the hair modes.
In section \ref{signind} we will see that including the Wilson lines as hair modes instead of the
bosonic zero modes given in (\ref{torzero}) does not preserve the positivity of the index 
of the horizon states.

\subsubsection*{$T^6/\mathbb{Z}_3$ model}

Performing the same analysis as done for the $\mathbb{Z}_2$ orbfiold we obtain the 
following partition function for the hair modes.
\begin{equation}\label{4dtorhair2}
Z_{\rm{hair}: T^6/\mathbb{Z}_3}^{4d} = 
-(e^{\pi i v}-e^{-\pi i v})^{-2}\prod_{l=1}^\infty  \left[ \frac{ 
 (1-e^{2\pi i (3l-1)\rho})^{3} (1-e^{2\pi i (3l-2)\rho})^{3}}
 {(1-e^{2\pi i (3l)\rho})^{-2}} \right].
\end{equation}
The horizon states is given by 
\begin{equation}
Z_{\rm{hor}: T^6/\mathbb{Z}_3}^{4d}  =  -\frac{1}{\tilde \Phi_1(\rho, \sigma, v) 
Z_{\rm{hair}: T^6/\mathbb{Z}_3}^{4d}  } ,
\end{equation}
where $\tilde\Phi_1$ is given by (\ref{phi1}) or (\ref{siegform2}). 
For reference we also provide the partition function of the Wilson lines in this model
\begin{eqnarray}\label{wilson2}
&&Z_{{\rm Wilson}: T^4/\mathbb{Z}_3}  =  \nonumber \\ \nonumber
&&\prod_{l =1}^\infty 
\left[  \frac{ 
( 1-e^{2\pi i ( ( 3l -1)\rho  + v)} ) ( 1-e^{2\pi i ( ( 3l -2)\rho + v)} ) 
( 1-e^{2\pi i ( ( 3l -1)\rho - v) } )
( 1-e^{2\pi i ( ( 3l -2)\rho - v) } ) }
{( 1-e^{2\pi i ( ( 3l -1)\rho  )} )^2 ( 1-e^{2\pi i ( ( 3l -2)\rho )} )^2 }
\right] \nonumber \\
\end{eqnarray}

From the expression for the Wilson lines and the infinite product representation given 
for $\tilde \Phi_k $ given in  (\ref{siegform2}) we obtain the following useful expression for the 
partition function for the horizon modes  for both the toroidal orbifolds. 
\begin{eqnarray} \nonumber
&& Z_{\rm{hor}; \; T^6/\mathbb{Z}_N}^{4d} =e^{-2\pi i \rho}
 \prod_{r=0}^{N-1}
\prod_{\begin{smallmatrix}k\in \mathbb{Z}+r/N,l\in \mathbb{Z},\\ j\in \mathbb{Z}\\ k>0,l\ge 0\end{smallmatrix}}(1-e^{2\pi i(k\sigma+ l\rho+jv)})^{-\sum_s e^{-2\pi isl/N}c^{(r,s)}(4kl-j^2)}
\\ 
&&\times \prod_{l=1}^{\infty} \left[ 
(1-e^{2\pi i (Nl\rho+v)})^{-2} (1-e^{2\pi i (Nl\rho-v)})^{-2}   \right]
( e^{\pi i v} - e^{-\pi i v}) ^2 
 \times Z_{{\rm Wilson}: T^4/\mathbb{Z}_N}. \nonumber \\ \label{4dhortor}
\end{eqnarray}

\section{Horizon states for the BMPV black hole}\label{check}

We now examine the BMPV black hole in 5 dimensions, that is the transverse space
now does not have the Taub-Nut solution. The main reason for studying the problem 
in 5 dimensions is that the near horizon geometry of the BMPV black hole in 5 dimensions
is same as the of the $1/4$ BPS dyon in 4 dimensions. 
This implies that the partition function of the horizon states of these 2 systems should 
be identical. 
In this section we   construct the partition function of   the 
hair and the horizon states for the BMPV black hole in type IIB on  $K3\times S^1/g'$ 
as well 
as toroidal orbifolds of $T^5$.  Here $g'$ corresponds to all the conjugacy classes of $M_{23}$. 

\subsection{Partition function of BMPV black holes}

 The partition function for these black holes
in the canonical compactification $K3\times S^1$,  was constructed in 
\cite{Banerjee:2009uk}. 
The same analysis can  be extended to all the CHL models. 
The partition function receives contributions from the following sectors.
\begin{itemize}
\item The bound states of the D1-D5 system, this is given by the elliptic genus of the 
symmetric product of $K3/{g'}$. This contribution was evaluated in 
\cite{David:2006yn}. It is given by 
\begin{eqnarray}
Z^{5d}_{S^N K3/g'}= e^{-2\pi i\sigma/N} \prod_{r=0}^{N-1}
\prod_{\begin{smallmatrix}k\in \mathbb{Z}+r/N,l\in \mathbb{Z},\\ j\in \mathbb{Z}\\ k>0,l\ge 0\end{smallmatrix}}(1-e^{2\pi i(k\sigma+ l\rho+jv)})^{-\sum_s e^{-2\pi isl/N}c^{(r,s)}(4kl-j^2)}.
\nonumber \\
\end{eqnarray}
\item The centre of mass  motion of the D1-D5 system in flat space. The degrees of freedom 
consist of $4$ bosons and $4$ fermions. 2 pairs of  bosons carry the angular momentum $J=\pm 1$. 
\cite{Banerjee:2009uk}. 
\begin{equation}
Z^{5d}_{{\rm c.o.m}} = \prod_{l=1}^{\infty} \left[ (1-e^{2\pi i (Nl\rho+v)})^{-2}(1-e^{2\pi i (Nl\rho-v)})^{-2} 
(1-e^{2\pi i Nl\rho})^4 \right].
\end{equation}
Note that  the only difference from the canonical model is that the unit of 
momentum on $S^1$  is $N$ due to the $1/N$ shift. 
\item   $4$ right chiral zero modes which contribute as $(-1)^J e^{2\pi J}$ which contribute 
in pairs with $J = \pm\frac{1}{2}$ 
\begin{equation}
Z^{5d}_{{\rm zero modes}} = - ( e^{\pi i v} - e^{-\pi i v} )^2.
\end{equation}
\item A shift of $e^{-2\pi i \rho}$ to ensure to take into account of the difference in the 
electric charge measured at infinity and the horizon \cite{Banerjee:2009uk}. 
\end{itemize}
Combining all the sectors we obtain the following expression for the partition function 
for BMPV black hole for all orbifolds of $K3\times S^1$. 
\begin{eqnarray}\nn
Z^{5d}_{g'} &=& -e^{-2\pi i(\rho+\sigma/N)} \prod_{r=0}^{N-1}
\prod_{\begin{smallmatrix}k\in \mathcal{Z}+r/N,l\in \mathcal{Z},\\ j\in \mathcal{Z}\\ k>0,l\ge 0\end{smallmatrix}}(1-e^{2\pi i(k\sigma+ l\rho+jv)})^{-\sum_s e^{-2\pi isl/N}c^{(r,s)}(4kl-j^2)}\\ \nn
&& \times  \prod_{l=1}^{\infty} \left[ 
(1-e^{2\pi i (Nl\rho+v)})^{-2}(1-e^{2\pi i (Nl\rho-v)})^{-2} (e^{\pi i v}-e^{-\pi i v})^{2} (1-e^{2\pi i Nl\rho})^4.
\right]
\\ \label{d5d}
\end{eqnarray}
Here the coefficients $c^{(r, s)}$  have to be read out from the twisted elliptic genus of  $K3$ 
by $g'$ corresponding to the conjugacy classes of $M_{23}$.

Using the counting of states  for the dyon partition function done
in  \cite{David:2006ru} we can extend the analysis to the toroidal models. 
We present the analysis in some detail for the $T^6/\mathbb{Z}_2$ model
Here the contributions arise from the following:
\begin{itemize}
\item
The bound state of the D1-D5 system on the $T^4/\mathbb{Z}_2$ orbifold is given by 
\begin{eqnarray}
Z^{5d}_{S^N T^4/\mathbb{Z}_2 } &=&  \prod_{r=0}^{N-1}
\prod_{\begin{smallmatrix}k\in \mathbb{Z}+r/N,l\in \mathbb{Z},\\ j\in \mathbb{Z}\\ k>0,l\ge 0\end{smallmatrix}}(1-e^{2\pi i(k\sigma+ l\rho+jv)})^{-\sum_s e^{-2\pi isl/N}c^{(r,s)}(4kl-j^2)}\nonumber
\\
&& \qquad\qquad\quad  N=2 
\end{eqnarray}
Here the coefficients $c^{(r, s)}$ are read out from the expansion of the functions given in 
(\ref{2tortwist}). 
\item
The contribution of the Wilson lines on $T^4/\mathbb{Z}_2$ which is given by 
\begin{eqnarray}
Z^{5d}_{{\rm Wilson}: T^4/\mathbb{Z}_2} =\prod_{l=1}^\infty \left[
(1- e^{2\pi i  (2l -1) \rho  + 2\pi i v })^2 (1- e^{2\pi i  (2l -1) \rho  - 2\pi i v })^2
(1- e^{2\pi i  (2l -1) \rho   })^{-4} \right]\nonumber \\
\end{eqnarray}
\item The partition function corresponding to the centre of mass motion of the D1-D5 system
in the transverse space
\begin{equation}
Z^{5d}_{{\rm c.o.m}} = \prod_{l=1}^{\infty} \left[ (1-e^{2\pi i (Nl\rho+v)})^{-2}(1-e^{2\pi i (Nl\rho-v)})^{-2} 
(1-e^{2\pi i Nl\rho})^4 \right], \qquad N=2.
\end{equation}
\item The contribution of the zero modes
\begin{equation}
Z^{5d}_{{\rm zero modes}} = - ( e^{\pi i v} - e^{-\pi i v} )^2.
\end{equation}
\item  The shift in the electric charge  accounted for by the factor $e^{-2\pi i \rho}$. 
\end{itemize}
Combining all the contributions we obtain 
 \begin{eqnarray}\nn
&& Z^{5d}_{T^5/\mathbb{Z}_N}= -e^{-2\pi i \rho} \prod_{\begin{smallmatrix}k\in \mathcal{Z}+r/N,l\in \mathcal{Z},\\ j\in \mathcal{Z}\\ k>0,l\ge 0\end{smallmatrix}}(1-e^{2\pi i(k\sigma+ l\rho+jv)})^{-\sum_s e^{-2\pi isl/N}c^{(r,s)}(4kl-j^2)}\\ \nn
&& \times \prod_{l=1}^{\infty} \left[ 
(1-e^{2\pi i (Nl\rho+v)})^{-2}
(1-e^{2\pi i (Nl\rho-v)})^{-2} 
(1-e^{2\pi i Nl\rho})^4 \right]  (e^{\pi i v}-e^{-\pi i v})^{2}  \times Z_{{\rm Wilson}: T^4/\mathbb{Z}_N}
\\  & & \qquad \qquad \qquad N=2 .
\label{d5dtor}
\end{eqnarray}

The partition function of the BMPV black hole in the $T^5/\mathbb{Z}_3$  is obtained 
model is given by  the same expression as in (\ref{d5dtor}) except that the coefficients
$c^{(r, s)}$ must  be read out from the functions given in (\ref{3tortwist}) and 
$N\rightarrow 3$.

\subsection{Orbifolds of $K3\times S^1$ }

We now construct  the hair modes in $5$ dimensions  for the $K3\times S^1/g'$ 
where the quotient by $g'$ associated with any conjugacy classes of the Mathieu group
$M_{23}$. 
The analysis proceeds identical to that done in \cite{Jatkar:2009yd}, the only difference being that the 
unit of momentum on $S^1$ is $N$. 
Here we briefly state the contributions. 
\begin{itemize} 
\item The contribution of the 4 real left moving gravitino  deformations of the BMPV black hole \footnote{
The bosonic deformations were shown to be singular at the horizon in \cite{Jatkar:2009yd}. }.
\begin{eqnarray}
Z_{{\rm hair}: \; g'}^{5d; \; f}  = \prod_{l = 1}^\infty ( 1- e^{2\pi il N  \rho})^4.
\end{eqnarray}
\item The contribution of the $8$ real gravitino zero modes among the $12$ modes 
 due to broken supersymmetries which carry angular momentum $J=\pm \frac{1}{2}$. 
 \begin{eqnarray}
 Z_{{\rm hair}:\;  g' }^{5d; \; {\rm zero\;modes} }    = ( e^{\pi iv} - e^{- \pi i v} )^4.
 \end{eqnarray}
\end{itemize}
Combining these contributions we obtain 
\begin{eqnarray} \label{5dhair}
Z_{\rm{hair}: \; g'}^{5d} = ( e^{\pi iv} - e^{- \pi i v} )^4 \prod_{l = 1}^\infty ( 1- e^{2\pi il N  \rho})^4.
\end{eqnarray} 

The partition function for the horizon states is given by 
\begin{eqnarray}\label{5dhor}
Z_{{\rm hor}: \; g'}^{5d}  = \frac{Z^{5d}_{g'}}{ Z_{\rm{hair}: \; g'}^{5d} } .
\end{eqnarray}
Now comparing the horizon states of the $4d$ dyons from (\ref{horparti2}) and 
using (\ref{d5d}) and (\ref{5dhair}) in (\ref{5dhor}) we can easily conclude 
\begin{equation}
Z_{{\rm hor}: \; g'}^{5d} = Z_{{\rm hor}: \; g'}^{4d}.
\end{equation}

\subsection{Toroidal models}

For the toroidal models  the contributions of the hair are as follows. 
\begin{itemize}
\item The contribution of the $4$ left moving  gravitino modes  which result in 
\begin{eqnarray}
Z_{{\rm hair}: \; T^5/\mathbb{Z}_{N}}^{5d; \; f}  = \prod_{l = 1}^\infty ( 1- e^{2\pi il N  \rho})^4 , 
\qquad\qquad N =2, 3.
\end{eqnarray}
\item The contirbution of the zero modes. 
As we discussed earlier,  supersymmetry in these models is tied to the 
radius of $S^1$.  We propose that due to this, out of 
$8$  gravitino zero modes arising from broken supersymmetries 
which has angular momentum $J=\pm \frac{1}{2}$, the wave functions of $4$ of them 
either become singular at the horizon or they  not localized  outside the horizon. 
These $4$ modes should not be counted as hair modes. 
Therefore the contribution of the zero modes in these models are given by 
\begin{eqnarray}
Z_{{\rm hair}: \; T^5/\mathbb{Z}_{N}}^{5d; \; {\rm zero\; modes} } =- ( e^{\pi i v} - e^{- \pi i v  } )^2.
\end{eqnarray}
As we will see consistency checks for this proposal will be done in section (\ref{signind}). 
\end{itemize}
Combining these contributions we obtain
\begin{eqnarray}\label{5dhairtor}
Z_{\rm{hair}: \; T^5/\mathbb{Z}_N}^{5d} 
= -( e^{\pi iv} - e^{- \pi i v} )^2 \prod_{l = 1}^\infty ( 1- e^{2\pi il N  \rho})^4.
\end{eqnarray} 
The horizon partition function from the $5d$ perspective is given by 
\begin{equation}\label{5dhortor}
Z_{\rm{hor}: T^5/\mathbb{Z}_N}^{5d}  = \frac{ Z^{5d}_{T^5/\mathbb{Z}_N}}
{Z_{\rm{hair}: \; T^5/\mathbb{Z}_N}^{5d} }.
\end{equation}
Comparing  the $4d$ horizon partition function given in (\ref{4dhortor}) and using 
(\ref{d5dtor}) and (\ref{5dhairtor}) in (\ref{5dhortor}) we see that
\begin{equation}
Z_{\rm{hor}: T^5/\mathbb{Z}_N}^{5d}  = Z_{\rm{hor}: T^6/\mathbb{Z}_N}^{4d} .
\end{equation}

\section{The sign of the index for horizon states}
\label{signind}

In this section we will address the main goal of the paper. 
We observe that the index of  horizon states is always positive.

\subsection{Canonical example: $K3\times T^2$}

For the un-orbifolded model recall that  the hair in $4d$ is given by 
\be\label{hair2b}
Z_{ {\rm hair } : 1A}^{4d}=\prod_{l =1}^{\infty} \frac{1}{(1-e^{2\pi i l\rho})^{20}}.
\ee
The partition function of the horizon states is obtained by 
\begin{eqnarray} \label{horstate}
Z_{{\rm hor} :1A} &=& \frac{1}{\Phi_{10}( \rho, \sigma, v)  Z_{ {\rm hair } : 1A}^{4d}} 
=  \frac{ \prod_{l=1}^\infty ( 1- e^{2\pi i l \rho} ) }{ \Phi_{10} ( \rho, \sigma, v) }.
\end{eqnarray}
It was  observed in \cite{Sen:2010mz}  that the  index $-B_6$ or the 
Fourier coefficients of $1/\Phi_{10}$ extracted using the 
contour in (\ref{contour}) subject to the  kinematic restrictions
\begin{eqnarray} \label{keres}
Q.P \geq 0, \quad Q\cdot P \leq Q^2, \quad Q\cdot P \leq P^2,  \quad Q^2, P^2, ( Q^2 P^2 - 
(Q.P)^2 ) >0
\end{eqnarray}
were positive.  The contour together with the 
above kinematic constraints ensures that the index
counts single centred dyons. 
Further more 
 \citep{Bringmann:2012zr}  proved that  the index of all single centered dyons with 
 $P^2 = 2, 4$ is positive. 
 These works assumed that there existed a frame in which the fermionic zero modes 
 associated with 
 broken supersymmetries were the only hair. We have seen that the type IIB frame the
 hair degrees of freedom is given by (\ref{hair2b}). 
 Now naively it seems from the expression for the horizon states in (\ref{horstate}) 
 there are negative terms introduced due to the factor in the numerator and the observation
 of positivity seen in (\citep{Sen:2010mz}) and \citep{Bringmann:2012zr} might be violated once
 the hair in the type IIB frame is factored out. 
 However we will show by adapting the proof of  \citep{Bringmann:2012zr} that 
 single centred dyons with $P^2 =2$ do have positive index. 
 For other values of charges we evaluate the index numerically, our results 
 are presented in table \ref{k3}. 
 We observe that for single centered dyons the index is indeed positive.

%
%
%
\begin{table}[H]
	\renewcommand{\arraystretch}{0.5}
	\begin{center}
		\vspace{0.5cm}
		\begin{tabular}{|c|c|c|c|c|c|}
			\hline
			& & & & & \\
			$(Q^2,\;P^2)\;\;\;$ \textbackslash $Q\cdot P$ & 0 & 1 & 2 & 3 &4\\ 
			& & & & & \\
			\hline
			& & & & & \\
			(2, 2) & 28944 & 13863 & 1608 & 327 &  0  \\
			(2, 4) & 761312  & 406296 & 72424  & 6936 & {$-$ {648}}\\
			(2, 6) & 12324920  & 6995541 & 1423152 & 96619 & { {$-13680$}}\\
			(2, 8) & 148800072  & 88006584 & 19366320 & 1152216 & {$-${164244}}\\
			(4, 2) & 272832 & 154236 & 28944 & 1836 & {$-$ {648}}\\
			(4, 4) & 12980224 & 8595680 & 2665376 & 406296 & 25760 \\
			(4, 6) & 333276712  & 235492308 & 85781820 & 16141380 & 1423152 \\
			(6, 6) & 6227822652   & 4771720755 & 2158667028 & 572268361 &  85781820 \\
			\hline			
		\end{tabular}
	\end{center}
	\vspace{-0.5cm}
	\caption{Index of horizon states for $K3\times T^2$,  note that  negative numbers have zero or negative values for $Q^2P^2 -(Q\cdot P)^2 $.}\label{k3}
	\renewcommand{\arraystretch}{0.5}
\end{table}

\subsubsection*{Proof of positivity at $P^2=2$}

We can do a Fourier expansion of $\frac{1}{\Phi_{10}(\tau,\sigma,z)}${\footnote{We use the variable $\tau$ instead of $\rho$ and $z$ in place of $v$ to keep consistency with previous work \citep{Chattopadhyaya:2018xvg}}} in terms of Jacobi forms.
\begin{equation}\label{fjdecomp}
\frac{1}{\Phi_{10}  ( q,  p, y ) }  =   \sum_{m = -1}^\infty \psi_m (\tau, z )  p^m, 
\qquad q = e^{2\pi i \tau}, p = e^{2 \pi i \sigma}, y = e^{2\pi i z} .
\end{equation}
$\psi_m(\tau, z)  \eta^{24}(\tau) $ is a weak Jacobi form of of weight 2 and index $m$. 
In \cite{Dabholkar:2012nd} it was shown that 
 $\psi_m(\tau, z) $ admits
the following decomposition 
\begin{equation}
\psi_m(\tau, z)  = \psi_m^{\rm P} (\tau, z)  + \psi_m^{\rm F} (\tau, z),
\end{equation}
where, $\psi_m^{\rm F} (\tau, z)$ has no poles in $z$.  The polar part 
is given by an Appell-Lerch sum:
\begin{eqnarray}
\psi_m^{\rm P} ( \tau, z) = \frac{p_{24}(m+1)}{\eta^{24}(\tau) } {\cal A}_{2, m }(\tau, z) , 
\\ \nonumber
{\cal A}_{2, m }(\tau, z) = \sum_{s\in \mathbb{Z}} 
\frac{q^{ms^2 +s}  y^{2ms +1}}{ (1 - q^s y )^2}  .
\end{eqnarray}
At $P^2=2$ we have $m=1$ and we can write
\be
\psi_1^{\rm F} (\tau, z)=-\frac{3}{\Delta}(E_4 B(\tau,z)+216{\cal H}(\tau,z)).
\ee
We need to show that $\psi_1^h=-\frac{3}{q\prod(1-q^n)^4}(E_4 B(\tau,z)+216{\cal H}(\tau,z))$ has the positivity property.
Here ${\cal H}$ is  the simplest  Jacobi mock modular form defined by  the Hurwitz-Kronecker class numbers
\begin{eqnarray}
{\cal H}(\tau, z)  = \sum_{n =0}^\infty   H( 4n - j^2) q^n y^l .
\end{eqnarray}
The coefficients $H(n)$ are defined by 
\begin{eqnarray}
H( n)  &=& 0 \qquad \hbox{for}  \; n <0, \\ 
 \sum_{ n\in\mathbb{Z} }  H(n) q^n &=& 
-\frac{1}{12} +\frac{1}{3} q^3  + \frac{1}{2} q^4 + q^7 + q^8  + q^{11}  + \cdots\\
{\cal H}(\tau,z) &=& \theta_3(2\tau, 2z)h_0(\tau)+\theta_2(2\tau, 2z)h_1(\tau).
\end{eqnarray}

We can write the weak Jacobi form $B(\tau,z)$ given in (\ref{ab}) as:
\be
B(\tau,z)=\frac{\theta_1^2(\tau,z)}{\eta^6}=\frac{1}{\eta^6}(\theta_2(2\tau)\theta_3(2\tau,2z)-\theta_3(2\tau)\theta_2(2\tau,2z))
\ee
where, $\theta_2(\tau,z)=\sum_{n\in\mathbb{Z}}q^{\frac{(n+1/2)^2}{2}}y^{n+1/2}$ and $\theta_3(\tau,z)=\sum_{n\in\mathbb{Z}}q^{n^2/2}y^n$ and $y=e^{2\pi i z}$.
So we see that even and odd powers of $y$ are separated in $\psi_1^F$ by the two theta functions. 
With this we can write $\psi_1^F$ and $\psi_1^h$ as follows:
\begin{eqnarray}\nn
\psi_1^F=\frac{3}{\Delta}\left(\theta_2(2\tau, 2z) \left(\frac{\theta_3(2\tau)}{\eta^6}E_4-216 h_1(\tau)\right)-\theta_3(2\tau, 2z)\left(\frac{\theta_2(2\tau)}{\eta^6}E_4+216 h_0(\tau)\right)\right)\\ \nn
\psi_1^h=\frac{3}{\Delta_4}\left(\theta_2(2\tau, 2z) \left(\frac{\theta_3(2\tau)}{\eta^6}E_4-216 h_1(\tau)\right)-\theta_3(2\tau, 2z)\left(\frac{\theta_2(2\tau)}{\eta^6}E_4+216 h_0(\tau)\right)\right),\\ \label{psi1h}
\end{eqnarray}
where $\Delta_4=q\prod_{n=1}^{\infty}(1-q^n)^4$.
We know the following results,
\begin{enumerate}
\item The Fourier coefficients in $h_0(\tau)$ and $h_1(\tau)$ are positive except for $q^0$ in $h_0(\tau)$ \cite{Bringmann:2012zr}.
\item All Fourier coefficients in the $q$ expansion of $\frac{\theta_2(2\tau)}{\eta^6} $ or $\frac{\theta_3(2\tau)}{\eta^6} $ are positive.
\item $E_4=1+240\sum_{n=1}^{\infty}\sigma_3(n) q^n$, where $\sigma_3(n)$ is given by, $\sum_{d, d|n} d^3$.
\end{enumerate}
Let us  observe the expression:
$\left(\frac{\theta_2(2\tau)}{\eta^6}E_4+216 h_0(\tau)\right)$. The only negative Fourier coefficient appears at $q^0$.
We can prove the folowing lemma:
\begin{lemma}
For a function $f(q)=-1+\sum_{n=1}^{\infty} a(n)q^n$ having all positive $a(n)$, the function $\frac{f(q)}{\prod_{n=1}^{\infty}(1-q^n)^k}$ has positive coefficients as long as $a(1)>k$ and $a(n+1)>k $ for all $n\in\mathbb{N}$.
\end{lemma}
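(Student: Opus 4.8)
The plan is to turn the assertion into a purely combinatorial inequality for colored partitions. Write $u_k(q):=\prod_{n\ge 1}(1-q^n)^{-k}=\sum_{m\ge 0}p_k(m)\,q^m$, so that $p_k(m)$ counts the $k$-colored partitions of $m$; in particular $p_k(0)=1$ and $p_k(m)\ge 1$ for every $m\ge 0$. Since $f(q)/\prod_{n\ge 1}(1-q^n)^k=f(q)\,u_k(q)$, for $N\ge 1$ one has
\[
[q^N]\bigl(f\,u_k\bigr)\;=\;-\,p_k(N)+\sum_{n=1}^{N}a(n)\,p_k(N-n)
\]
(its $q^0$-coefficient is manifestly $-1$, so the positivity assertion concerns the coefficients of $q^n$ with $n\ge 1$, which are the ones that enter the index). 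Writing $a(n)=k+\delta_n$ with $\delta_n:=a(n)-k>0$ and regrouping,
\[
[q^N]\bigl(f\,u_k\bigr)\;=\;\Bigl(k\sum_{m=0}^{N-1}p_k(m)-p_k(N)\Bigr)\;+\;\sum_{n=1}^{N}\delta_n\,p_k(N-n).
\]
The second sum is strictly positive (each $\delta_n>0$, each $p_k(N-n)\ge 1$, and already the $n=N$ term gives $\delta_N>0$), so the whole statement reduces to the single inequality $p_k(N)\le k\sum_{m=0}^{N-1}p_k(m)$ for all $N\ge 1$.

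This inequality I would prove by a direct injection. Totally order colored parts by value, breaking ties by color; a nonempty $k$-colored partition $\lambda$ of $N$ then has a unique maximal colored part, of value $\ell\in\{1,\dots,N\}$ and color $c\in\{1,\dots,k\}$, and deleting one copy of it yields a $k$-colored partition $\mu$ of $N-\ell$. The assignment $\lambda\mapsto(\ell,c,\mu)$ is injective, since $\lambda$ is recovered from $(\ell,c,\mu)$ by adjoining one part $(\ell,c)$ to $\mu$; hence $p_k(N)\le\sum_{\ell=1}^{N}k\,p_k(N-\ell)=k\sum_{m=0}^{N-1}p_k(m)$. (Alternatively this can be obtained by generating functions: writing $u_k(q)=(1-q)^{-k}\prod_{n\ge 2}(1-q^n)^{-k}$, the hockey-stick identity together with $k\binom{j+k}{k}-\binom{j+k}{k-1}=j\binom{j+k}{k-1}$ expresses $k\sum_{m<N}p_k(m)-p_k(N)$ as an explicit nonnegative binomial sum minus $r_N:=[q^N]\prod_{n\ge 2}(1-q^n)^{-k}$, and the positive part is seen to dominate $r_N$ via the Euler recursion $N r_N=k\sum_{m\ge 2}(\sigma_1(m)-1)\,r_{N-m}$ and an elementary bound on $\sigma_1$.)

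Putting the pieces together, for every $N\ge 1$ the coefficient $[q^N](f\,u_k)$ is the sum of a nonnegative quantity and a strictly positive one, hence is strictly positive, which is exactly the claim. The one genuine point of content is the inequality $p_k(N)\le k\sum_{m<N}p_k(m)$ — morally, the colored-partition function grows slowly enough that $k$ times its partial sum always overtakes its next value — and the reason the hypotheses take the particular shape $a(n)>k$ is that $a(n)\equiv k$ is precisely the borderline case in which the bracketed term can vanish (e.g.\ at $N=1$); thus strictness of $a(n)>k$ is exactly what is needed, and sufficient, to upgrade nonnegativity to positivity.
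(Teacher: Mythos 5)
Your proof is correct, and it takes a genuinely different route from the paper's. The paper peels off the Euler factors one at a time, setting $f_{r+1}(q)=f_r(q)/(1-q^{r+1})^k$: it computes the coefficient of $q^N$ in $f(q)/(1-q)^k$ as $-\binom{N+k-1}{N}+\sum_{n=1}^{N}\binom{N-n+k-1}{N-n}a(n)$, uses the hockey-stick identity and the hypothesis $a(n)>k$ to show this exceeds $k$, and then asserts that the remaining factors are handled ``similarly'' --- the lower bound $>k$ (rather than merely $>0$) being the invariant that lets the hypothesis propagate through the iteration. You instead absorb the entire product $\prod_{n\ge 1}(1-q^n)^{-k}$ in one stroke, which reduces the claim to the single combinatorial inequality $p_k(N)\le k\sum_{m=0}^{N-1}p_k(m)$ for $k$-colored partitions; your injection (delete one copy of the maximal colored part, remembering its value and color) proves that inequality outright, and strict positivity of each coefficient then comes from the $n=N$ term $(a(N)-k)\,p_k(0)>0$, exactly as you say. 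Your version is more self-contained --- there is no iteration to track and no ``similarly'' left to the reader --- while the paper's version retains an explicit quantitative bound ($>k$) on every intermediate coefficient, which is slightly more information than positivity alone. Your parenthetical generating-function alternative is only a sketch (the bound via $\sigma_1$ is not supplied), but it is not needed: the injection argument stands on its own.
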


\begin{proof}
We prove this for $\frac{1}{(1-q)^k}$ and then the rest can be similarly proved by using $q\rightarrow q^r$ and taking $f_{r+1}(q)=\frac{f_r(q)}{(1-q^{r+1})^k}$. For $f_2$ the coefficient of $q^1$ is, $a(1)-k>0$ and the coefficient of $q^N$  for $N>1$ is given by,
\[ -\binom{N+k-1}{N}+ \binom{N+k-2}{N-1}a(1)+ \binom{N+k-3}{N-2}a(2)\cdots > k.\]
\end{proof}
We can write 
\be
\frac{1}{16}\left(\frac{\theta_2(2\tau)}{\eta^6}E_4+216 h_0(\tau)\right)=-1+\sum_{n=1}^{\infty}a(n)q^n.
\ee
Here $a(1)>15\sigma(1)>4$. Hence the removal of hair degrees of freedom ensures positivity of $-B_6$ for the sector $Q\cdot P={\rm even}$ when $Q^2\ge 0$.

In the series asociated with $\theta_2(2\tau,2z)$ in equation (\ref{psi1h}) 
the Fourier coefficient of $q^{n-1/4}$ is bounded from below by,
\[10\sigma_3(n)-9H(4n-1).\] Its positivity is ensured starting from $n=2$ using the following bounds:
\begin{enumerate}
\item $\sigma_3(n) \ge n^3$,
\item $H(n)<n$ \citep{Bringmann:2012zr}.
\end{enumerate}
 For $n=1$ the positivity still holds as $H(3)=1/3$. So the complete $q$ series expansion of $\left(\frac{\theta_3(2\tau)}{\eta^6}E_4-216 h_1(\tau)\right)$ contains no negative Fourier coefficient. This could also be seen from the Fourier expansion of $\left(\frac{\theta_3(2\tau)}{\eta^6}E_4-216 h_1(\tau)\right)$,
\be
\left(\frac{\theta_3(2\tau)}{\eta^6}E_4-216 h_1(\tau)\right)=q^{-1/4}(1+176q+\cdots).
\ee
This ensures the positivity of $-B_6$ for $Q\cdot P={\rm odd}$ and hence for $\psi_1^h$ as expected for $P^2=2$.


\subsection{Orbifolds of $K3\times T^2$}

For the $2A$ orbifold we extract the index of single centred dyons by using the contour 
in (\ref{contour}) together with the  following kinematic constraints on 
the charges \cite{Sen:2010mz}.
\begin{eqnarray}\label{tor2reg}
Q^2>0, \; P^2>0, \; Q.P\ge 0, \; P^2Q^2-(Q\cdot P)^2>0, \\ \nn
2Q^2\ge Q\cdot P, \; P^2 \ge Q\cdot P, \; P^2+2Q^2\ge 3 Q\cdot P.
\end{eqnarray}
The index of the horizon states for the $2A$ orbifold is given  in table \ref{chl2}.
\begin{table}[H]
	\renewcommand{\arraystretch}{0.5}
	\begin{center}
		\vspace{0.5cm}
		\begin{tabular}{|c|c|c|c|c|c|}
			\hline
			& & & & & \\
			$(Q^2,\;P^2)\;\;\;$ \textbackslash $Q\cdot P$ & 0 & 1 & 2 & 3 & 4\\ 
			& & & & & \\
			\hline
			& & & & & \\
			(1, 2) & 580 & 176 & $-2$ & 0 & 0\\
			(1, 4) & 5504 & 1856 & 32 & 0 & 0\\
			(1, 6) & 41476 & 16200 & 996 & 52 & 0\\
			(1, 10) &1293256 & 589200 & 63556 & 2752 &$-104$\\
			(2, 2) & 1312 & 576 & 48 & 0 & 0 \\
			(2, 4) & 16896 & 8640 & 1280 & 64 & 0 \\
			(3, 2) & 9708 & 4696 & 580 & 52 & 0\\
			\hline			
		\end{tabular}
	\end{center}
	\vspace{-0.5cm}
	\caption{Index of horizon states for the  $2A$ orbifold of $K3$ }\label{chl2}
	\renewcommand{\arraystretch}{0.5}
\end{table}

The kinematic constraints on the charges for the $3B$ orbifold  so that 
the dyons are single centered are given by 
\begin{eqnarray}
\{Q^2, \; P^2, P^2Q^2-(Q\cdot P)^2\}>0 \; Q.P\ge 0,
3Q^2\ge Q\cdot P, \; P^2 \ge Q\cdot P,\\ \nn
 2P^2+3Q^2\ge 5 Q\cdot P,\; P^2+6Q^2\ge 5 Q\cdot P, \; 2P^2+6Q^2\ge 7 Q\cdot P.
\end{eqnarray}
The index for the horizon states is then obtained using contour (\ref{contour}) and is 
listed in 
table \ref{chl3}.

\begin{table}[H]
	\renewcommand{\arraystretch}{0.5}
	\begin{center}
		\vspace{0.5cm}
		\begin{tabular}{|c|c|c|c|c|c|}
			\hline
			& & & & & \\
			$(Q^2,\;P^2)\;\;\;$ \textbackslash $Q\cdot P$ & 0 & 1 & 2 & 3 & 4\\ 
			& & & & & \\
			\hline
			& & & & & \\
			(2/3, 2) & 216 & 27 & 0 & 0 & 0 \\
			(2/3, 4) & 1548 & 342 & 0 & 0 & 0\\
			(2/3, 6) & 8532 & 2430 & 54 & 0 & 0\\
			(4/3, 2) & 540	& 216 & 0 & 0 & 0\\
			(4/3, 4) & 5820 & 2698 & 136 & 0 & 0\\
			(2, 2) & 1728 & 621 & 54 & 0 & 0\\
			(2, 6) & 204264 & 117837 & 23400 & 765 & 0\\
			(2, 8) & 1440288 & 896670 & 216540 & 13932 & $54$\\
			\hline
		\end{tabular}
	\end{center}
	\vspace{-0.5cm}
	\caption{Index of horizon states for the  $3A$ orbifold of $K3$}\label{chl3}
	\renewcommand{\arraystretch}{0.5}
\end{table}

For an orbifold of order $N>3$ there are infinite set of constraints for the charges to ensure that 
the index corresponds to single centered dyons
\cite{Sen:2010mz}. However we see as long as the norms of electric and magnetic charges are positive and $Q\cdot P\ge 0$ together with $Q^2P^2-(Q\cdot P)^2>0$, the index
 $-B_6$ remains positive for the orbifolds of $K3$ (see the tables \ref{table4b}-\ref{table23a}). These orbifolds maybe geometric like that of CHL or even non-geometric where $g'\in [M_{23}]$.

\begin{table}[H]
	\renewcommand{\arraystretch}{0.5}
	\begin{center}
		\vspace{0.5cm}
		\begin{tabular}{|c|c|c|c|c|c|}
			\hline
			& & & & & \\
			$(Q^2,\;P^2)\;\;\;$ \textbackslash $Q\cdot P$ & 0 & 1 & 2 & 3 & 4\\ 
			& & & & & \\
			\hline
			& & & & & \\
			(1/2, 2) & 64 & 8 & 0 & 0 & 0\\
			(1/2, 4) & 288 & 80 & 0 & 0 & 0\\
			(1/2, 6) & 1088 & 464 & 24 & 0 & 0\\
			(1, 2) & 96 & 48 & 0 & 0 & 0\\
			(1, 4) & 464 & 480 & 16 & 0& 0\\
			(3/2, 4) & 640 & 1680 & 160 & 0 & 0 \\
			(3/2, 6) & 3958 & 11448 & 2026 & 38 & 0\\
			(3/2, 22) & 232188670 & 421276388 & 228036842 & 43979890 & 2695862\\
			\hline			
		\end{tabular}
	\end{center}
	\vspace{-0.5cm}
	\caption{Index of horizon states for the $4B$  orbifold of $K3$} \label{table4b}
	\renewcommand{\arraystretch}{0.5}
\end{table}

\begin{table}[H]
	\renewcommand{\arraystretch}{0.5}
	\begin{center}
		\vspace{0.5cm}
		\begin{tabular}{|c|c|c|c|c|c|}
			\hline
			& & & & & \\
			$(Q^2,\;P^2)\;\;\;$ \textbackslash $Q\cdot P$ & 0 & 1 & 2 & 3 & 4\\ 
			& & & & & \\
			\hline
			& & & & & \\
			(2/5, 2) & 44 & 1 & 0 & 0 & 0  \\
			(2/5, 4) & 220 & 20  & 0 &0 & 0 \\
			(2/5, 6) & 880 & 125 & 0 & 0 & 0 \\
			(4/5, 2) & 88 & 16 & 0 & 0 & 0 \\
			(4/5, 4) & 560 & 160 & 0 & 0 & 0 \\
			(6/5, 6) & 8360 & 3755 & 310 & 0 & 0\\
			(6/5, 8) & 37394 & 18720 & 2202 & 16 & 0\\ 
			\hline
		\end{tabular}
	\end{center}
	\vspace{-0.5cm}
	\caption{Index of horizon states for the  $5A$ orbifold of $K3$} \label{table5a}
	\renewcommand{\arraystretch}{0.5}
\end{table}

\begin{table}[H]
	\renewcommand{\arraystretch}{0.5}
	\begin{center}
		\vspace{0.5cm}
		\begin{tabular}{|c|c|c|c|c|c|}
			\hline
			& & & & & \\
			$(Q^2,\;P^2)\;\;\;$ \textbackslash $Q\cdot P$ & 0 & 1 & 2 & 3 & 4\\ 
			& & & & & \\
			\hline
			& & & & & \\
			(1/3, 2) & 24 & 1 & 0 & 0 & 0\\
			(1/3, 4) & 92 & 12 & 0 & 0 & 0\\
			(1/3, 6) & 318 & 49 & 0 & 0 & 0\\
			(2/3, 2) & 44 & 10 & 0 & 0 & 0\\
			(2/3, 4) & 236 & 68 & 0 & 0 & 0\\
			(1, 4) & 564 & 216 & 8 & 0 & 0 \\
			(1, 6) & 2702 & 1201 & 100 & 0 & 0\\
			(1/3, 34) & 15836220 & 6614053 & 409414 & 1789 & $-14$\\
			\hline			
		\end{tabular}
	\end{center}
	\vspace{-0.5cm}
	\caption{Index of horizon states  $6A$ orbifold of $K3$}\label{table6a}
	\renewcommand{\arraystretch}{0.5}
\end{table}

\begin{table}[H]
	\renewcommand{\arraystretch}{0.5}
	\begin{center}
		\vspace{0.5cm}
		\begin{tabular}{|c|c|c|c|c|c|}
			\hline
			& & & & & \\
			$(Q^2,\;P^2)\;\;\;$ \textbackslash $Q\cdot P$ & 0 & 1 & 2 & 3 & 4\\ 
			& & & & & \\
			\hline
			& & & & & \\
			(2/7, 2) & 18 & 0 & 0 & 0 & 0\\
			(2/7, 4) & 72 & 3 & 0 & 0 & 0\\
			(2/7, 6) & 240 & 18 & 0 & 0 & 0\\
			(4/7, 2) & 30 & 3 & 0 & 0 & 0 \\
			(4/7, 4) & 150 & 31 & 0 & 0 & 0\\
			(6/7, 8) & 5580 & 2304 & 0 & 0 & 0\\
			(2/7, 40) & 46940778 & 18696804 & 1139238 & 4689 & $-18$\\
			\hline
		\end{tabular}
	\end{center}
	\vspace{-0.5cm}
	\caption{Index of horizon states for the  $7A$ orbifold of $K3$} \label{7a}
	\renewcommand{\arraystretch}{0.5}
\end{table}

\begin{table}[H]
	\renewcommand{\arraystretch}{0.5}
	\begin{center}
		\vspace{0.5cm}
		\begin{tabular}{|c|c|c|c|c|c|}
			\hline
			& & & & & \\
			$(Q^2,\;P^2)\;\;\;$ \textbackslash $Q\cdot P$ & 0 & 1 & 2 & 3 & 4\\ 
			& & & & & \\
			\hline
			& & & & & \\
			(1/4, 2) & 12 & 0 & 0 & 0 & 0\\
			(1/4, 4) & 40 & 2 &  0 & 0 & 0\\
			(1/4, 6) & 124 &  10 & 0 & 0 & 0\\
			(1/2, 2) & 20 & 2 & 0 & 0 & 0\\
			(1/2, 4) & 88 & 16 & 0 & 0 & 0\\
			(3/4, 4) & 176 & 52 & 0 & 0 & 0\\
			(3/4, 6) & 708 & 248 & 6 & 0 & 0 \\
			(1/4, 46) & 37469836 & 15088039 & 845410 & 2491 & $-10$\\
			\hline			
		\end{tabular}
	\end{center}
	\vspace{-0.5cm}
	\caption{Index of horizon states for the  $8A$ orbifold of $K3$} \label{table8a}
	\renewcommand{\arraystretch}{0.5}
\end{table}

It is interesting to see that the index for horizon states even in  non-geometric orbifolds of $K3$ retains positivity of the index in the domain $NQ^2\ge Q\cdot P, P^2 \ge Q\cdot P, Q^2 P^2-(Q\cdot P)^2>0.$
\begin{table}[H]
	\renewcommand{\arraystretch}{0.5}
	\begin{center}
		\vspace{0.5cm}
		\begin{tabular}{|c|c|c|c|c|c|}
			\hline
			& & & & & \\
			$(Q^2,\;P^2)\;\;\;$ \textbackslash $Q\cdot P$ & 0 & 1 & 2 & 3 & 4\\ 
			& & & & & \\
			\hline
			& & & & & \\
			(2/11, 2) & 6 & 0 & 0 & 0 & 0\\
			(2/11, 4) & 18 & 0 & 0 & 0 & 0\\
			(2/11, 6) & 50 & 1 & 0 & 0 & 0\\
			(4/11, 2) & 8 & 0 & 0 & 0 & 0\\
			(4/11, 4) & 32 & 4 & 0 & 0 & 0\\
			(6/11, 8) & 592 & 172 & 2 & 0 & 0\\
			(6/11, 10) & 1568 & 527 & 16 & 0 & 0\\
			\hline
		\end{tabular}
	\end{center}
	\vspace{-0.5cm}
	\caption{Index of horizon states for the 11A orbifold of $K3$}\label{table11a}
	\renewcommand{\arraystretch}{0.5}
\end{table}

\begin{table}[H]
	\renewcommand{\arraystretch}{0.5}
	\begin{center}
		\vspace{0.5cm}
		\begin{tabular}{|c|c|c|c|c|c|}
			\hline
			& & & & & \\
			$(Q^2,\;P^2)\;\;\;$ \textbackslash $Q\cdot P$ & 0 & 1 & 2 & 3 & 4\\ 
			& & & & & \\
			\hline
			& & & & & \\
			(1/7, 2) & 3 & 0 & 0 & 0 & 0\\
			(1/7, 4) & 7 & 0 & 0 & 0 & 0\\
			(1/7, 6) & 18 & 0 & 0 & 0 & 0\\
			(2/7, 2) & 4 & 0 & 0 & 0 & 0\\
			(2/7, 4) & 14 & 1 & 0 & 0 & 0\\
			(3/7, 8) & 163 & 45 & 0 & 0 & 0\\
			(3/7, 10) & 390 & 116 & 2 & 0 & 0\\
			(4/7, 10) & 774 & 329 & 14 & 0 & 0\\
			\hline
		\end{tabular}
	\end{center}
	\vspace{-0.5cm}
	\caption{Index of horizon states for the  $14A$  orbifold of $K3$}\label{table14a}
	\renewcommand{\arraystretch}{0.5}
\end{table}

\begin{table}[H]
	\renewcommand{\arraystretch}{0.5}
	\begin{center}
		\vspace{0.5cm}
		\begin{tabular}{|c|c|c|c|c|c|}
			\hline
			& & & & & \\
			$(Q^2,\;P^2)\;\;\;$ \textbackslash $Q\cdot P$ & 0 & 1 & 2 & 3 & 4\\ 
			& & & & & \\
			\hline
			& & & & & \\
			(2/15, 2) & 3 & 0 & 0 & 0 & 0\\
			(2/15, 4) & 6 & 0 & 0 & 0 & 0\\
			(2/15, 6) & 15 & 0 & 0 & 0 & 0\\
			(4/15, 2) & 3 & 1 & 0 & 0 & 0\\
			(4/15, 4) & 10 & 4 & 0 & 0 & 0\\
			(2/5, 8) & 125 & 31 & 0 & 0 & 0\\
			(2/5, 10) & 277 & 80 & 1 & 0 & 0\\
			(8/15, 10) & 527 & 227 & 9 & 0 & 0\\
			\hline
		\end{tabular}
	\end{center}
	\vspace{-0.5cm}
	\caption{Index of horizon states for the $15A$ orbifold of $K3$}\label{table15a}
	\renewcommand{\arraystretch}{0.5}
\end{table}

\begin{table}[H]
	\renewcommand{\arraystretch}{0.5}
	\begin{center}
		\vspace{0.5cm}
		\begin{tabular}{|c|c|c|c|c|c|}
			\hline
			& & & & & \\
			$(Q^2,\;P^2)\;\;\;$ \textbackslash $Q\cdot P$ & 0 & 1 & 2 & 3 & 4\\ 
			& & & & & \\
			\hline
			& & & & & \\
			(2/23, 2) & 1 & 0 & 0 & 0 & 0\\
			(2/23, 4) & 2 & 0 & 0 & 0 & 0\\
			(2/23, 6) & 5 & 0 & 0 & 0 & 0\\
			(4/23, 2) & 14 & 2 & 0 & 0 & 0\\
			(4/23, 4) & 28 & 4 & 0 & 0 & 0\\
			(6/23, 8) & 87 & 36 & 4 & 0 & 0\\
			(6/23, 10) & 144 & 57 & 6 & 0 & 0\\
			\hline
		\end{tabular}
	\end{center}
	\vspace{-0.5cm}
	\caption{ Index of horizon states for the $23A$  orbifold of $K3$}\label{table23a}
	\renewcommand{\arraystretch}{0.5}
\end{table}

\subsection{Toroidal orbifolds}

In \cite{Chattopadhyaya:2018xvg} we have seen that 
positivity of index for single centred dyons was violated for the toroidal models. 
For completeness we have reproduced some of the indices evaluated 
in  \cite{Chattopadhyaya:2018xvg}  in tables \ref{qp0}, \ref{qp1}, \ref{qp2}

\begin{table}[H] \footnotesize{
	\renewcommand{\arraystretch}{0.5}
	\begin{center}
		\vspace{0.5cm}
		\begin{tabular}{|c|c|c|c|c|}
			\hline
			 & & & & \\
			$Q^2\;\;\;$ \textbackslash \textbackslash $P^2$  & 2 & 4 & 6 & 8 \\ 
			& & & & \\
			\hline
			& & & &  \\
			1 & {\bf -224} & {\bf -1248} & 1728 & 95104 \\
			2 & 1152 & 18240 & 233984 & 2432544 \\
			3 & {\bf -3392} &{\bf  -10320} & 542976 & 12103360 \\
			4 & -11520 & 200736 & 4575744 & 86712256 \\
			5 & {\bf -30336} & {\bf -55424} & 12914944 & 412163328 \\
			6 & 83968 & 1544832 & 61928448 & 2013023104 \\
			7 & {\bf -202560} & {\bf -179022} & 175358304 & 8292093664\\ 
			8 & 496512 & 9480000 & 638922240 & 32998944096 \\
			9 & {\bf -1118496} & {\bf -155232} & 1735394112 & 119618619520 \\
			10 & 2521600 & 49523328 & 5364983808 & 415768863360 \\
			\hline
		\end{tabular}
	\end{center}
	\vspace{0.5cm}
	\caption{The index $d(Q, P) $ for  the  $\mathbb{Z}_2$ toroidal orbifold
		 some  low lying values of $Q^2$, $P^2$ with $Q\cdot P=0$. }\label{qp0}
	\renewcommand{\arraystretch}{0.5}
	}
\end{table}

\begin{table}[H] \footnotesize{
	\renewcommand{\arraystretch}{0.5}
	\begin{center}
		\vspace{0.5cm}
		\begin{tabular}{|c|c|c|c|c|}
			\hline
			& & & & \\
			$Q^2\;\;\;$ \textbackslash \textbackslash $P^2$  & 2 & 4 & 6 & 8 \\ 
			& & & & \\
			\hline
			& & & &  \\
			1 & 96 & 1968 & 22528 & 190047 \\
			2 & {\bf -256} & 840 & 70912 & 1127672 \\
			3 & 1376 & 34656 & 728256 & 11046139 \\
			4 & {\bf -3840} & 16632 & 2497408 & 61486056 \\
			5 & 13152 & 343152 & 13144832 & 348876305 \\
			6 & {\bf -33536} & 171152 & 42058240 & 1603241304 \\
			7 & 92928 & 2476752 & 162898624 & 7016918625 \\
			8 & {\bf -220672} & 1265256 & 480911872 & 27503872048 \\
			9 & 540416 & 14545584 & 1556561664 & 102315259287 \\
			10 & {\bf -1204992} & 7558560 & 4271142656 & 354800345088 \\
			\hline
		\end{tabular}
	\end{center}
	\vspace{0.5cm}
	\caption{ The index $d(Q, P) $ for  the $\mathbb{Z}_2$ toroidal orbifold
		 some  low lying values of $Q^2$, $P^2$ with
	 $Q\cdot P=1$.  }\label{qp1}
	\renewcommand{\arraystretch}{0.5}
	}
\end{table}

\begin{table}[H] \footnotesize{
	\renewcommand{\arraystretch}{0.5}
	\begin{center}
		\vspace{0.5cm}
		\begin{tabular}{|c|c|c|c|c|}
			\hline
			& & & &  \\
			$Q^2\;\;\;$ \textbackslash \textbackslash $P^2$  & 2 & 4 & 6 & 8 \\ 
			& & & &  \\
			\hline
			& & & &   \\
			1&	 0 & {- 12} & {- 224} & { -1248} \\
			2&	64 & 2592 & 43264 & 491904 \\
			3&	 {\bf- 224} & 2432 & 191168 & 3805600 \\
			4&	 1152 & 43392 & 1440256 & 30853488 \\
			5&	 {\bf -3392} & 33720 & 5363680 & 171782688 \\
			6&	  11520 & 414336 & 24533248 & 893029504 \\
			7&	 {\bf -30336} & 302400 & 80281536 & 3963098880 \\
			8&	 83968 & 2926080 & 287831552 & 16432262672 \\
			9&	  {\bf -202560} & 2049968 & 851816352 & 62214237440 \\
			10&	 496512 & 16919712 & 2627695616 & 222752294016 \\
			\hline
		\end{tabular}
	\end{center}
	\vspace{0.5cm}
	\caption{The index $d(Q, P) $ for  the  $\mathbb{Z}_2$ toroidal orbifold
		 some  low lying values of $Q^2$, $P^2$ with $Q\cdot P=2$. }\label{qp2}
	\renewcommand{\arraystretch}{0.5}
	}
\end{table}

\subsection*{Positivity of the horizon states for toroidal models}

The indices in tables \ref{qp0}, \ref{qp1}, \ref{qp2} 
were obtained 
 under the assumption that there exists a frame in which the fermionic
zero modes associated with broken supersymmetries are the only hair. 
In (\ref{4dtorhair1}) and (\ref{4dtorhair2}) we have proposed the partition function for the hair 
degrees of freedom in the type IIB frame for the $\mathbb{Z}_2, \mathbb{Z}_3$ 
toroidal orbifolds respectively. 
We evaluate the indices of horizon states in the following tables (\ref{tabletor1}-\ref{tabletor6}) and observe 
that they are all positive for single centered dyons.


\begin{table}[H]
	\renewcommand{\arraystretch}{0.5}
	\begin{center}
		\vspace{0.5cm}
		\begin{tabular}{|c|c|c|c|c|}
			\hline
			& & & &  \\
			$Q^2\;$ {\large{\textbackslash\textbackslash}} $P^2$  & 2 & 4 & 6 & 8\\ 
			& & & &  \\
			\hline
			& & & &  \\
	1 & 832 & 14816 & 158848 & 1283902 \\
	2 & 3840 & 101008 & 1425920 & 14471264 \\
	3 & 14624 & 556176 & 10273024 & 129971582 \\
	4 & 48128 & 2588336 & 62037760 & 971443680 \\
	5 & 143424 & 10594400 & 325402624 & 6254176746 \\
	6 & 394112 & 39145344 & 1521266688 & 35582718576 \\
	7 & 1016080 & 133122060 & 6465235840 & 182481593350 \\
	8 & 2480512 & 422430736 & 25355844096 & 856661245280 \\
	9 & 5786240 & 1264061344 & 92844570752 & 3726638152610 \\
	10 & 12968576 &	3595680768 &	320340466176 &	15170555788976\\
			\hline
\end{tabular}
\end{center}
\vspace{-0.5cm}
\caption{Index of horizon states for the   $\mathbb{Z}_2$ orbifold of $T^6$ for $Q\cdot P=0$.}
\label{tabletor1}
\renewcommand{\arraystretch}{0.5}
\end{table}

\begin{table}[H]
	\renewcommand{\arraystretch}{0.5}
	\begin{center}
		\vspace{0.5cm}
		\begin{tabular}{|c|c|c|c|c|}
			\hline
			& & & &  \\
			$Q^2\;$ {\large{\textbackslash\textbackslash}} $P^2$  & 2 & 4 & 6 & 8\\ 
			& & & &  \\
			\hline
			& & & &  \\
	1 & 480 & 9012 & 98784 & 811166 \\
	2 & 2496 & 69328 & 1001472 & 10329280 \\
	3 & 9888 & 403448 & 7664064 & 98689790 \\
	4 & 33664 & 1946480 & 48074496 & 766539920 \\
	5 & 102272 & 8155848 & 258619232 & 5063997322 \\
	6 & 286208 & 30667504 & 1231379200 & 29352001136 \\
	7 & 747456 & 105699406 & 5306269024 & 152656500694 \\
	8 & 1847040 & 339109664 & 21040306176 & 724593923536 \\
	9 & 4350816 & 1024054008 & 77737446688 & 3180401982114 \\
	10 & 9841408	&2935991504&	270248202752&	13043376086768\\
			\hline
\end{tabular}
\end{center}
\vspace{-0.5cm}
\caption{Index of horizon states for the   $\mathbb{Z}_2$ orbifold of $T^6$ for $Q\cdot P=1$.}
\label{tabletor2}
\renewcommand{\arraystretch}{0.5}
\end{table}

\begin{table}[H]
	\renewcommand{\arraystretch}{0.5}
	\begin{center}
		\vspace{0.5cm}
		\begin{tabular}{|c|c|c|c|c|}
			\hline
			& & & &  \\
			$Q^2\;$ {\large{\textbackslash\textbackslash}} $P^2$  & 2 & 4 & 6 & 8\\ 
			& & & &  \\
			\hline
			& & & &  \\
	1 & 96 & 1880 & 21056 & 178660 \\
	2 & 640 & 21312 & 329728 & 3577216 \\
	3 & 2992 & 151056 & 3115712 & 42306045 \\
	4 & 11008 & 813280 & 22062720 & 371908656 \\
	5 & 35840 & 3669600 & 128569280 & 2665839255 \\
	6 & 105472 & 14554120 & 647882496 & 16372365048 \\
	7 & 288192 & 52296704 & 2913889600 & 88924896642 \\
	8 & 738560 & 173535528 & 11950263808 & 436628175032 \\
	9 & 1798688 & 539123792 & 45385181120 & 1969579830259 \\
	10 & 4187008 & 1583791144 & 161466383616 & 8262793111120 \\
			\hline
		\end{tabular}
	\end{center}
	\vspace{-0.5cm}
	\caption{Index of horizon states for the  $\mathbb{Z}_2$ orbifold of $T^6$ for $Q\cdot P=2$.}
	\label{tabletor3}
	\renewcommand{\arraystretch}{0.5}
\end{table}

\begin{table}[H]
	\renewcommand{\arraystretch}{0.5}
	\begin{center}
		\vspace{0.5cm}
		\begin{tabular}{|c|c|c|c|c|}
			\hline
			& & & &  \\
			$Q^2\;$ {\large{\textbackslash \textbackslash}} $P^2$  & 2 & 4 & 6 & 8\\ 
			& & & &  \\
			\hline
			& & & &  \\
	1 & 0 & $-$12 & $-$224 & $-$1046 \\
	2 & 64 & 2480 & 40960 & 484752 \\
	3 & 320 & 26590 & 632544 & 9430780 \\
	4 & 1408 & 178096 & 5723136 & 106304080 \\
	5 & 5088 & 916872 & 38694432 & 887612004 \\
	6 & 16896 & 4001712 & 215960576 & 6052758272 \\
	7 & 50432 & 15481304 & 1047526432 & 35500683214 \\
	8 & 140352 & 54572672 & 4557481728 & 184959084864 \\
	9 & 365536 & 178371800 & 18160058144 & 874917932484 \\
	10 & 905600 & 547471520 & 67260039168 & 3817189761008 \\
			\hline
\end{tabular}
\end{center}
\vspace{-0.5cm}
\caption{Index of horizon states for the  $\mathbb{Z}_2$ orbifold of $T^6$ for $Q\cdot P=3$. Note that it is only when $Q^2P^2 - ( Q\cdot P)^2 <0$ we observe that the index is negative.}
\label{tabletor4}
\renewcommand{\arraystretch}{0.5}
\end{table}

\begin{table}[H]
	\renewcommand{\arraystretch}{0.5}
	\begin{center}
		\vspace{0.5cm}
		\begin{tabular}{|c|c|c|c|c|}
			\hline
			& & & &  \\
			$Q^2\;$ {\large{\textbackslash\textbackslash}} $P^2$  & 2 & 4 & 6 & 8\\ 
			& & & &  \\
			\hline
			& & & &  \\
	1 & 0 & 0 & 0 & 37 \\
	2 & 0 & $-8 $& $-256$ & 1232 \\
	3 & 16 & 1900 & 50880 & 868435 \\
	4 & 0 & 17928 & 757376 & 16261008 \\
	5& 96 & 114160 & 6613888 & 176919248 \\
	6 & 512 & 576016 & 43399680 & 1427632608 \\
	7 & 2416 & 2506512 & 236442496 & 9431113673 \\
	8 & 8320 & 9731384 & 1124958848 & 53751377384 \\
	9 & 26592 & 34532368 & 4818946176 & 272969682473 \\
	10 & 75904 & 113759408 & 18960610304 & 1262218427744 \\
			\hline
\end{tabular}
\end{center}
\vspace{-0.5cm}
\caption{Index of horizon states for the 
 $\mathbb{Z}_2$ orbifold of $T^6$ when  $Q\cdot P=4$. Note that only 
 when $Q^2P^2 - ( Q\cdot P)^2 <0$ we observe that the index is negative.}
 \label{tabletor5}
\renewcommand{\arraystretch}{0.5}
\end{table}


We now enumerate the consistency checks we have done for the proposal 
for the hair modes in the $T^6/\mathbb{Z}_2$ toroidal model given in  (\ref{4dtorhair1}).
\begin{enumerate}
\item If we do not include the 
zero modes  $- e^{2\pi i v} ( 1- e^{2\pi i v} )^{-2}$ 
as part of the hair partition function in 
 $T^6/\mathbb{Z}_2$, then we observe the violation of  positivity in index for 
 $P^2=6, Q^2=1, Q\cdot P=2$ and  $P^2=6, Q^2=2, Q\cdot P=3$. The indices 
  for these dyonic charges are $-224$ and $-256$ respectively. These charges are 
   within the kinematic domain 
 defined by (\ref{tor2reg}).

\item  If we include the  contribution of the Wilson lines
 given in (\ref{wilson1}) as part of the hair partition function
and remove the contribution of the zero modes
   $- e^{2\pi i v} ( 1- e^{2\pi i v} )^{-2}$,   we find violations in positivity of the index. 
   This can be observed at $P^2=6,\; Q^2=1,\; Q\cdot P=2$, $P^2=6,\; Q^2=2$, $Q\cdot P=3$, $P^2=4, \; Q^2=4, Q\cdot P=3$, the indices are  given by $-64, -64, -4$ respectively.
   \end{enumerate}
   
   These two observations show that we certainly need to include the contribution
   of the zero modes    $- e^{2\pi i v} ( 1- e^{2\pi i v} )^{-2}$  as part of the hair partition function
   which is consistent with our proposal.  It would be interesting to prove this by studying the 
   wave function of the gravitino zero modes in the toroidal models.

%
A very similar analysis holds true for $T^6/\mathbb{Z}_3$. 
The index of horizon states obtained by considering the proposal given in (\ref{4dtorhair1})
for the hair partition function is positive as shown in the subsequent tables. 
We have also repeated the  consistency checks we mentioned earlier for the $\mathbb{Z}_2$ orbifold 
in this case with the same conclusions. 


\begin{table}[H]
	\renewcommand{\arraystretch}{0.5}
	\begin{center}
		\vspace{0.5cm}
		\begin{tabular}{|c|c|c|c|c|c|}
			\hline
			& & & & & \\
			$(Q^2,\;P^2)\;\;\;$ \textbackslash $Q\cdot P$ & 0 & 1 & 2 & 3 & 4\\ 
			& & & & & \\
			\hline
			& & & & & \\
			(2/3, 2) & 162 & 90 & 9 & 0 & 0 \\
			(2/3, 4) & 1944 & 1134 & 162 & 0 & 0\\
			(2/3, 6) & 14598 & 8748 & 1149 & 0 & 0\\
			(4/3, 2) & 540 & 324 & 72 & 0 & 0\\
			(4/3, 4) & 8856 & 5724 & 1458 & 54 & 0\\
			(2, 2) & 1566 & 1008 & 243 & 18 & 0\\
			(2, 4) & 34344 & 23652 & 7290 & 810 & 0\\
			(2, 6) & 402972 & 286734  & 98613 & 13614 & 249 \\
			\hline
		\end{tabular}
	\end{center}
	\vspace{-0.5cm}
	\caption{Index of horizon states for the  $T^6/\mathbb{Z}_3$ orbifold}
	 \label{tabletor6}
	\renewcommand{\arraystretch}{0.5}
\end{table}

\section{Conclusions} \label{conclusions}

We have constructed the horizon partition function of the $1/4$ BPS dyonic black hole 
in ${\cal N}=4$ theories obtained by compactifying type IIB on orbifolds of $K3\times T^2$.
We then observed that the index of the horizon states of single centred black holes 
are all positive.  We adapted the proof of \cite{Bringmann:2012zr} and showed that
the  index of the horizon  partition function of
single centred dyons with $P^2=2$  remains positive. 

For the toroidal models we propose that the hair modes are given by (\ref{4dtorhair1}) and (\ref{4dtorhair1}). 
We showed the index of horizon states with this proposal is positive and performed consistency 
checks. 
As mentioned earlier it would be  interesting to study  the wave function 
of the zero modes of the gravitino in the toroidal models to check 
the proposal in (\ref{4dtorhair1}) and (\ref{4dtorhair2}). 
In \cite{Chattopadhyaya:2018xvg} it was noticed that 
that the  index of single centred dyons in these models were not positive when one 
assumed that the only hair modes are the Fermionic zero modes associated with 
broken supersymmetry generators.  Since hair modes are frame dependent, 
 the observations in this paper indicates that there is possibly no duality frame 
 for these models which contains only the Fermionic zero modes as the hair. 
 It will be interesting to verify this explicitly by an study similar to that 
 done in \cite{Chowdhury:2014yca,Chowdhury:2015gbk} for the 
 ${\cal N}=8$ theory. 

The observation that the index of horizon states in the canonical compactification on 
$K3\times T^2$ is positive is worth further study. It should be possible to extend the proof
of \cite{Bringmann:2012zr}  to higher values of $P^2$.

\vspace{.5cm}
{\bf Note Added:} As this work was nearing completion,  we became aware of the work done in 
\citep{Chakrabarti:2020ugm}.  The analysis of the hair modes 
 done for the CHL orbifolds of $K3$ in section \ref{sechorstate}  and 
\ref{check} overlaps with parts of \cite{Chakrabarti:2020ugm}. 

\acknowledgments{We thank Ashoke Sen for very 
useful  discussions  at several instances
over the course of this project which helped us to understand issues related to the positivity 
of the index. We also thank Jan Manschot for helpful discussions.
We thank 
Amitabh Virmani for discussions and informing us 
of the conclusions of \cite{Chakrabarti:2020ugm}. 
The work of A.C is funded by IRC Laureate Award 15175.}


\providecommand{\href}[2]{#2}\begingroup\raggedright\endgroup

\end{document}